\documentclass{article}
\usepackage[utf8]{inputenc}
\usepackage{setspace}
\usepackage[margin=1in]{geometry}
\usepackage{amssymb, bbm, amsthm}
\usepackage{booktabs}
\usepackage{publication}

\usepackage[table]{xcolor}
\usepackage{lmodern}
\usepackage{microtype}
\usepackage{mathtools}
\usepackage{array}
\usepackage{enumitem}
\usepackage{blkarray} 
\usepackage{tikz} 
\usepackage[hidelinks]{hyperref}
\usepackage[noabbrev,capitalize]{cleveref}
\usepackage[framemethod=TikZ]{mdframed}

\newtheorem{theorem}{Theorem}[section]
\newtheorem{lemma}[theorem]{Lemma}
\newtheorem{proposition}[theorem]{Proposition}
\newtheorem{corollary}[theorem]{Corollary}

\crefname{example}{example}{examples}
\newcounter{exampleBoxCounter}
\newcommand{\theExampleBox}{\arabic{exampleBoxCounter}}

\theoremstyle{definition}

\usepackage{etoolbox}
\theoremstyle{remark}

\AfterEndEnvironment{remark}{$\blacksquare$}

\newcounter{algorithm}
\renewcommand{\thealgorithm}{\arabic{algorithm}}

\begin{document}
\newcommand*\samethanks[1][\value{footnote}]{\footnotemark[#1]}

\title{A Practical Guide to Simulating Correlated Binary Outcomes}
\author[1]{Chi Heem Wong}
\contact{}
\affil[1]{Independent researcher. *Equal contribution.}
\author[1]{Zied Ben Chaouch}
\date{This version: Jul 17, 2026}
\maketitle

\begin{abstract}
Simulating dependent Bernoulli outcomes with prescribed means and pairwise Pearson correlations is a common task in risk modeling.
A familiar approach is the Gaussian-threshold workflow for binary outcomes, often viewed as a Bernoulli analogue of the Gaussian copula construction.
We show that setting latent Gaussian correlations equal to target Bernoulli correlations is generally incorrect after thresholding, and that pairwise tetrachoric calibration is exact only when the calibrated latent matrix is positive semidefinite.
We therefore formulate the problem directly over the joint Bernoulli probability mass function.
Given target means and pairwise correlations, we impose normalization, nonnegativity, mean constraints, and pairwise cross-moment constraints as a linear program over the $2^N$ atomic probabilities.
The resulting PMF formulation either returns an exact law matching the requested first and second moments or certifies infeasibility. A convex-hull characterization further shows that every feasible target admits a law supported on at most $1+N+\binom{N}{2}$ states, while every infeasible target admits a separating quadratic certificate.
We then develop a truncated-moment completion scheme that fits a reduced cross-moment table and generates samples by sequential conditioning, together with a sparse-support working-set refinement that can reduce memory usage on structured instances, although the worst-case complexity remains exponential.
Together, these constructions provide an exact PMF-based framework for feasibility and simulation at moderate dimension and structured alternatives when the full atomic representation is impractical, while clarifying the limits of Gaussian-threshold constructions.
\end{abstract}

\section{Introduction}
Correlated binary events are common in finance.
Company defaults, clinical drug development milestones, and option expirations can all be modeled as binary events, and modeling portfolios of such assets remains an active area of research.

In many applications, the input consists of a mean vector together with a target covariance matrix (or, equivalently, a correlation matrix and the means), and the task is to construct a simulator whose output matches those first two moments.
A familiar way to address this problem is to introduce a latent Gaussian model and then threshold it to obtain binary outcomes.
The latent-factor probit approach, widely used in credit risk, follows exactly this strategy: it matches the target Bernoulli means by applying probit thresholds to a low-rank latent Gaussian structure, so the induced dependence is constrained by the chosen factor specification.
A closely related full-rank construction replaces the low-rank factor model by a general latent correlation matrix.
We refer to this Bernoulli threshold construction as the ``Copula Method for Binary Outcomes,'' or CoMBO (see \Cref{sec:methods}), because it can be viewed as the binary adaptation of the Gaussian copula construction.

This latent-Gaussian route is attractive because it inherits familiar copula and factor-model machinery, but it does not directly resolve the underlying Bernoulli problem.
The naive version of CoMBO sets the latent Gaussian correlations equal to the target Bernoulli correlations before thresholding, but that rule is generally incorrect.
The natural pairwise correction is tetrachoric calibration, which yields a unique candidate latent correlation for each pair.
Even then, however, the resulting latent matrix need not be positive semidefinite: when it is, calibrated CoMBO is exact; when it is not, no exact Gaussian-threshold representation exists and any repaired implementation is necessarily approximate.
These observations suggest returning to the object that is actually sought, namely a joint law on $\{0,1\}^N$ with the prescribed first and second moments.

Our main technical contribution is therefore a direct formulation of this problem over the joint Bernoulli probability mass function.
We formulate an exact linear program over the $2^N$ atomic probabilities that matches the first and second moments and yields infeasibility certificates for moderate problem sizes. We characterize feasibility as membership in the convex hull of the binary first- and second-moment feature vectors; this yields both an $O(N^2)$ support bound for some compatible law and a separating quadratic witness whenever the target is infeasible.

This PMF linear program (PMF-LP) is the anchor of the paper: it answers the Bernoulli-feasibility question directly, without committing to any latent or parametric family.
We then develop two extensions that remain within the direct PMF and cross-moment framework.
First, we show how low-order cross-moment tables can be completed under a truncation assumption and then sampled by sequential conditioning, giving a reduced model class when a compact law is acceptable.
Second, we describe a sparse-support working-set refinement that seeks an exact law without materializing the full state space at once.

The rest of the paper is organized as follows.
We introduce the mathematical setup for correlated binary events in \Cref{sec:motivation}.
In \Cref{sec:methods}, we first review the Gaussian copula construction for continuous marginals and then explain the Bernoulli threshold adaptation used to generate correlated binary outcomes.
This background section clarifies why naive CoMBO does not, in general, produce the correct correlation between the generated variables and states the exact condition under which calibrated CoMBO works.
In \Cref{sec:bernoulli_pmf}, we then turn to the main contribution of the paper: we derive the linear moment constraints (normalization, nonnegativity, means, and pairwise cross-moments) that must hold for a joint Bernoulli distribution with the specified targets, and cast the task as a linear feasibility problem over the $2^N$ probabilities.
We first solve that exact problem with standard linear programming in \Cref{sec:linear_programming}, then introduce a truncated-moment completion with sequential conditioning in \Cref{sec:truncated_completion}.
We follow up with a short analysis of computational complexity in \Cref{sec:bernoulli_computation_complexity}.
The appendices collect a comparison of the main constructions, worked examples, detailed moment identities and truncation results, and the sparse-support refinement.
The accompanying code can be assessed at \url{https://github.com/chiheem/CorrelatedBinary}.

\section{Motivation}
\label{sec:motivation}
We consider a collection of $N$ binary outcomes $X_1, X_2, \dots, X_N$.
In the motivating portfolio interpretation, each $X_i$ represents the eventual success or failure of one program or asset, with
$P(X_i=1)=p_i$ and $P(X_i=0)=1-p_i$.
The user specifies the marginal means together with target pairwise Pearson correlations $\rho_{ij}$.
\Cref{eqn:first_moment,eqn:second_moment} summarize these first- and second-moment requirements.

\begin{equation}
\mathbb{E}[X_i] = p_i \qquad \forall i \in {1,\dots,N}
\label{eqn:first_moment}
\end{equation}
\begin{equation}
\operatorname{Corr}(X_i, X_j) = \rho_{ij} \qquad \forall i,j \in \{1,\dots,N\},\ i\neq j
\label{eqn:second_moment}
\end{equation}

For Bernoulli variables, each pair $(X_i,X_j)$ must also satisfy the following feasibility bounds \cite{chaganty2006}:
\begin{equation}
\rho_{\min} = \frac{\max\{0,\,p_i+p_j-1\} - p_i p_j}{\sqrt{p_i(1-p_i)\,p_j(1-p_j)}} \quad \le \; \rho_{ij} \; \le \quad
\rho_{\max} = \frac{\min\{p_i, p_j\} - p_i p_j}{\sqrt{p_i(1-p_i)\,p_j(1-p_j)}}.
\label{eqn:bernoulli_correlation_feasibility_bounds}
\end{equation}
These pairwise bounds are necessary for feasibility, but they are not sufficient to guarantee that a joint Bernoulli law exists for the full collection of requested moments.
The PMF-LP introduced later will serve as the exact global feasibility check.

To see why these inputs are natural in applications, consider a simple valuation model in which program $X_i$ pays $v_i^+$ upon success and $v_i^-$ upon failure.
Then the mean and variance of the portfolio value $V_p$ are:
\begin{align}
\mathbb{E}[V_p] 
&= \sum_{i=1}^N \big[ p_i \cdot v_i^+ + (1-p_i) \cdot v_i^- \big] \\
&= \sum_{i=1}^N v_i^- + \sum_{i=1}^N a_ip_i \hspace{2em}\text{where } a_i = v_i^+ - v_i^-
\label{eqn:portfolio_val_mean}
\end{align}
\begin{equation}
\operatorname{Var}(V_p)= \sum_{i=1}^N a_i^{2} \cdot p_i(1-p_i) +  2\sum_{1\le i<j\le N} a_i a_j \rho_{ij} \cdot \sqrt{p_i(1-p_i)\,p_j(1-p_j)}.
\label{eqn:portfolio_val_var}
\end{equation}

These formulas explain why means and pairwise correlations are natural primitives, but in many applications they are not the endpoint.
Milestone timing, early stopping, rebalancing, discounting, and downstream risk measures such as value-at-risk require Monte Carlo simulation rather than only closed-form moment calculations.
The practical task is therefore to generate Bernoulli vectors whose empirical first and second moments agree with a user-specified mean vector $\vec{\mu}$ and target correlation matrix $\mathbf{R}$.


\section{Gaussian-Threshold Methods for Binary Events}\label{sec:methods}
When target means and pairwise correlations are prescribed, two logically separate questions arise.
First, do those moments define any joint Bernoulli law at all?
Second, if they do, are they exactly representable by a latent-Gaussian threshold construction?
This section addresses only the second question.
We briefly review the Gaussian copula recipe for continuous marginals, define its Bernoulli threshold adaptation, and then state precisely when that adaptation is exact.
The first question, Bernoulli feasibility itself, is taken up in the next section through the PMF formulation.

\subsection{From the Gaussian copula construction to CoMBO}
A $d$-dimensional copula, $C:[0,1]^d \to [0, 1]$, is defined to be a cumulative distribution function (CDF) with uniform marginals.
By Sklar's theorem (1959) \cite{sklar1959fonctions}, for a $d$-dimensional cumulative distribution function (CDF) $F$ with marginals $F_1, \dots, F_d$, there exists a copula $C$ such that 
\begin{equation}
F(x_1, \dots, x_d)= C(F_1(x_1), \dots, F_d(x_d)) \qquad \forall x_i \in (-\infty, \infty), i \in \{1,2,\dots,d\}
\label{eqn:sklar_theorem}
\end{equation}

$C$ is unique if $F_i$ is continuous for all $i = 1, \dots, d$. Otherwise, $C$ is uniquely determined only on $Ran(F_1) \times \dots \times Ran(F_d)$, where $Ran(F_i)$ denotes the range of $F_i$. In the continuous case, $F_i(X_i) \sim U[0,1]$ and the copula $C$ is their joint distribution.

Since the probability integral transform is preserved under increasing transformations of each $X_i$, copulas are invariant under increasing transformations of the margins (i.e., they are `margin-free').
It has also been shown that any copula-based dependence measure, such as Spearman's correlation, is also margin-free if $F_1, \dots, F_d$ are continuous \cite{nelsen2006,joe2014}\footnote{Pearson correlation is \emph{not} margin-free: for a fixed copula, changing continuous marginals generally changes Pearson correlations. Thus a Gaussian copula with latent correlation $r$ preserves rank measures (e.g., Kendall's $\tau$, Spearman's $\rho_s$) but \emph{not} Pearson correlations. Matching Pearson correlations with non-Gaussian continuous marginals typically requires calibrating the latent $r$ (e.g., NORTA or Iman--Conover) \cite{cario1997norta,iman1982}.}.
The resulting separation between marginal behavior and dependence structure makes copulas a standard tool for modeling and simulating dependent variables with arbitrary continuous marginals.

For simplicity, the Gaussian copula is often chosen to model the dependence structure.
Sampling from a Gaussian copula with continuous marginals can be described in three steps:
\begin{enumerate}[itemsep=-1ex]
\item Sample random outcomes $\vec{z}$ from a multivariate Gaussian distribution, $N(0, \mathbf{R}_\text{lat})$. The latent correlation, $\mathbf{R}_\text{lat}$, is an internal parameter; for a target Pearson correlation matrix $\mathbf{R}$, it may need to be calibrated rather than simply set equal to $\mathbf{R}$. 
\item Apply the Gaussian CDF to each component of $\vec{z}$ to obtain a vector $\vec{u}$; $u_i \sim U[0,1]$ for all $i$.
\item[3.] (a) To obtain continuous variables, apply the inverse transformation $F_i^{-1}(u_i)$ along each component of $\vec{u}$.
\item[] (b) To obtain binary outcomes, prior literature applies an indicator function $\mathbbm{1}\{u_i < p_i\}$ (equivalently, $\mathbbm{1}\{z_i < \Phi^{-1}(p_i)\}$) to every component. Because this Bernoulli step is a thresholding adaptation rather than an invertible marginal transform, we refer to Steps 1, 2, and 3(b) together as the ``copula method for binary outcomes'', or \emph{CoMBO}.
\end{enumerate}

Step 3(a) is the classical Gaussian-copula recipe for continuous marginals.
Step 3(b) instead produces a latent-Gaussian threshold model with discrete Bernoulli margins.
Within this CoMBO family, \emph{naive CoMBO} sets $\mathbf{R}_\text{lat}=\mathbf{R}$, while \emph{pairwise-calibrated CoMBO} first solves for latent pairwise correlations before thresholding.
The distinction between Steps 3(a) and 3(b) is crucial: once the final map is a threshold rather than an invertible marginal transform, the latent correlation matrix becomes a calibration object, not the target Bernoulli correlation matrix itself.

\subsection{Why naive CoMBO fails, and when calibrated CoMBO is exact}\label{sec:wrong_correlation}
We now turn to the Gaussian-threshold representability question.
Suppose one wants to generate Bernoulli variables with means $p_i$ and target correlations $\rho_{ij}$ by thresholding a latent Gaussian vector.
The first issue is that the naive CoMBO does not preserve Pearson correlations.
The second is that even after correcting each pair individually, the calibrated pairwise latent correlations may fail to assemble into one valid multivariate Gaussian law.

\paragraph{Naive CoMBO generally misses the target Pearson correlations.}
Recall that CoMBO draws a latent Gaussian vector $Z\sim\mathcal{N}(0,\mathbf{R}_{\text{lat}})$, sets $U_i=\Phi(Z_i)$, and then produces Bernoulli outcomes by thresholding
$X_i=\mathbbm{1}\{U_i<p_i\}$, or equivalently, $X_i=\mathbbm{1}\{Z_i<\Phi^{-1}(p_i)\}$.
For continuous marginals, copulas are invariant under strictly increasing marginal transforms; however, thresholding is many-to-one and creates discrete marginals, for which copulas are not unique and the probability--integral transform is no longer uniform without randomization \cite{nelsen2006,joe2014,triverdi2005copula}.
As a result, Bernoulli Pearson correlations are nonlinear functions of the latent correlations and thresholds.

For a bivariate latent-Gaussian threshold model with latent correlation $r_{ij}$ and thresholds $t_i=\Phi^{-1}(p_i)$, $t_j=\Phi^{-1}(p_j)$, the induced Bernoulli Pearson correlation is
\begin{equation}
\operatorname{Corr}(X_i,X_j)
= \frac{\Phi_2(t_i,t_j;\, r_{ij}) - p_i p_j}{\sqrt{p_i(1-p_i)\,p_j(1-p_j)}},
\end{equation}
where $\Phi_2(\cdot,\cdot;\,r)$ is the bivariate standard normal CDF with correlation $r$.
Naively setting $r_{ij}=\rho_{ij}$ therefore does not, in general, yield $\operatorname{Corr}(X_i,X_j)=\rho_{ij}$.
This is the failure of naive CoMBO.

\paragraph{Pairwise calibration identifies the unique latent candidate, but multivariate consistency is still required.}
To target a desired Bernoulli Pearson correlation $\rho_{ij}$, one can instead \emph{calibrate} the latent correlation via the tetrachoric mapping \cite{emrich1991,qaqish2003}, i.e., solve
\begin{equation}\label{eqn:bernoulli_latent_calibration}
\Phi_2(t_i,t_j;\, r_{ij}) \;=\; p_i p_j \;+\; \rho_{ij}\,\sqrt{p_i(1-p_i)\,p_j(1-p_j)}
\quad\text{for } r_{ij}\in(-1,1).
\end{equation}
For continuous marginals, analogous calibration is standard in NORTA and Iman--Conover procedures \cite{cario1997norta,iman1982}.
Define the bivariate thresholding map
\[
g_{ij}(r)
:=
\frac{\Phi_2(t_i,t_j;\, r) - p_i p_j}{\sqrt{p_i(1-p_i)\,p_j(1-p_j)}}.
\]
Its derivative is
\[
g_{ij}'(r)
=
\frac{\phi_2(t_i,t_j;\, r)}{\sqrt{p_i(1-p_i)\,p_j(1-p_j)}} > 0,
\]
implying that $g_{ij}$ is strictly increasing and therefore invertible on its admissible range.
Thus pairwise calibration thus provides the \emph{unique} latent value
$
r_{ij}^\star = g_{ij}^{-1}(\rho_{ij})
$
required for exact CoMBO matching of pair $(i,j)$.
The remaining question is whether these calibrated pairwise values can be realized \emph{simultaneously} by one latent Gaussian vector.

Assemble the pairwise-calibrated values into the matrix
$
\mathbf{R}_{\mathrm{lat}}^\star := [r_{ij}^\star]
$;
If $\mathbf{R}_{\mathrm{lat}}^\star$ is a valid Gaussian correlation matrix (i.e., it has ones on the diagonal and is positive semidefinite) then calibrated CoMBO matches the requested Bernoulli means and pairwise correlations exactly.
Indeed, in this case, there exists $Z\sim \mathcal N(0,\mathbf{R}_{\mathrm{lat}}^\star)$, and every bivariate marginal $(Z_i,Z_j)$ has the calibrated correlation $r_{ij}^\star$, so thresholding reproduces $\rho_{ij}$ for every pair by construction.
On the other hand, if $\mathbf{R}_{\mathrm{lat}}^\star$ is not positive semidefinite (PSD), then no joint Gaussian latent vector exists with those pairwise calibrated entries.
Consequently, no exact Gaussian-threshold representation exists with the prescribed Bernoulli means and pairwise correlations.
Any practical repair that replaces $\mathbf{R}_{\mathrm{lat}}^\star$ by a nearby PSD matrix \emph{must} change at least one latent pair, and therefore changes at least one induced Bernoulli correlation as well.
A concrete numerical illustration, together with simulation results and the corresponding calculations, is given in \Cref{sec:simulation1}.

Note that the PSD test answers the Gaussian-threshold representability question, not the more basic Bernoulli-feasibility question.
The target moments may fail earlier because no joint Bernoulli law exists at all, in which case no exact simulator exists regardless of method.
Conversely, a target may be Bernoulli-feasible and yet still fall outside the Gaussian-threshold family because the unique pairwise-calibrated latent matrix is not PSD.
The next section therefore returns to the underlying Bernoulli problem and addresses feasibility directly through the joint PMF.

\section{The probability mass function (PMF) approach}\label{pmf_approach}
We now return to the first question: does there exist \emph{any} Bernoulli law with the requested first and second moments?
The most direct exact answer is to work with the joint PMF itself.
This direct PMF formulation serves two purposes;
First, it provides a global feasibility test when pairwise Bernoulli checks are inconclusive. Second, it yields an exact construction whenever the target is feasible but exact CoMBO is unavailable or undesirable.
We begin with the exact PMF linear program and then discuss several refinements and structured alternatives.
\Cref{table:method_comparison} summarizes the resulting constructions.

\subsection{The PMF of correlated Bernoulli}\label{sec:bernoulli_pmf}
Let $X=(X_1,\ldots,X_N)$ be a Bernoulli vector with $X_i\in\{0,1\}$.
Its joint probability mass function assigns a probability to each outcome $x=(x_1,\ldots,x_N)\in\{0,1\}^N$.

For bookkeeping, it is convenient to collect the $2^N$ possible outcomes into a matrix $M \in \{0,1\}^{2^N \times N}$:
\[
M = 
\begin{blockarray}{ccccc}
X_1 & X_2 & \cdots & X_{N-1} & X_N\\
\begin{block}{(ccccc)}
0 & 0 & \cdots & 0 & 0\\
0 & 0 & \cdots & 0 & 1\\
\vdots & \vdots & \ddots & \vdots & \vdots\\
1 & 1 & \cdots & 1 & 1\\
\end{block}
\end{blockarray}
\]

We represent the joint PMF as a vector $\vec{\alpha} \in [0,1]^{2^N}$ whose entries are the probabilities of the corresponding outcomes.
\[
\vec{\alpha} =
\begin{blockarray}{c}
\begin{block}{(c)}
p_{x_1=0, x_2=0, \ldots, x_N=0} \\
p_{x_1=0, x_2=0, \ldots, x_N=1} \\
\vdots\\
p_{x_1=1, x_2=1, \ldots, x_N=1} \\
\end{block}
\end{blockarray}
\]

The mean vector $\tilde{\mu}$ implied by $\vec{\alpha}$ is

\begin{equation}
    \tilde{\mu} = M^\mathsf{T}\vec{\alpha} \in \mathbb{R}^N
\label{equation:empirical_expectation}
\end{equation}

The covariance between variables $X_i$ and $X_j$, denoted $\tilde{\sigma}_{ij}$, is
\begin{equation}
\begin{aligned}
\tilde{\sigma}_{ij} 
&= \left( M[:,i] \circ M[:,j] \right)^\mathsf{T} \vec{\alpha} - (M[:,i]^\mathsf{T} \vec{\alpha}) \times (M[:,j]^\mathsf{T} \vec{\alpha}) \\
&= \vec{\kappa}_{ij}^\mathsf{T} \vec{\alpha} - \tilde{\mu}_i \tilde{\mu}_j \quad \textrm{where }\vec{\kappa}_{ij} = M[:,i] \circ M[:,j]
\end{aligned}
\end{equation}

These identities are the key point for the exact PMF-LP: the target means and pairwise cross-moments become linear constraints in the atomic probabilities $\vec{\alpha}$.

\subsection{Linear programming approach for obtaining the PMF (PMF-LP)}\label{section:opt_prog_maths}\label{solving_pmf}
Suppose the user provides the target mean vector $\vec{\mu}$ and target correlation matrix $\mathbf{R}$.
The corresponding covariance matrix $\mathbf{\Sigma}$ is

\begin{equation}
\sigma_{ij} =
\begin{cases}
\mu_i(1-\mu_i), & i=j, \\
\rho_{ij}\,\sqrt{\mu_i(1-\mu_i)\,\mu_j(1-\mu_j)}, & i \neq j,
\end{cases}
\end{equation}
where $\sigma_{ij}$ and $\rho_{ij}$ are elements in the covariance and correlation matrices, respectively.

Our goal is to recover a feasible $\vec{\alpha}$ such that the induced means match $\vec{\mu}$, the induced covariances match $\mathbf{\Sigma}$, and the probability axioms are respected.

\paragraph{Feasibility program}\label{sec:linear_programming}
This leads to the following feasibility program:
\begin{subequations}\label{eqn:linear_prog}
\begin{align}
\min_{\vec{\alpha}} 0 & \quad\quad \textrm{subject to:}\\
\mathbf{1}^\mathsf{T} \vec{\alpha} &= 1\\
\vec{\alpha} &\geq \vec{0} \\
M^\mathsf{T} \vec{\alpha} &= \vec{\mu} \\
\vec{\kappa}_{ij}^\mathsf{T} \vec{\alpha} &= \mathbf{\Sigma}_{ij} + \mu_i\mu_j \quad \forall 1 \le i < j \le N
\end{align}
\end{subequations}

This is a pure feasibility LP.
Any standard solver can be used to search for a compatible $\vec{\alpha}$, and a feasible solution immediately provides an exact simulator through sampling from the recovered PMF.
\Cref{sec:pmf_lp_refinement} discusses a sparse-support refinement that works with a much smaller working set of states when the full state space is too large to materialize.

In general, $\vec{\alpha}$ is not unique.
That nonuniqueness is acceptable here: the role of the PMF-LP is to certify feasibility and produce one compatible law, not to identify a canonical one.
The geometric interpretation and proofs are deferred to \Cref{prop:moment_polytope} in the appendix.
In particular, every feasible target admits a compatible distribution supported on at most
$1+N+\binom{N}{2}$ states, while every infeasible target admits a separating quadratic certificate.
These facts explain both why sparse solutions exist and how LP infeasibility can be witnessed in terms of
the requested first and second moments.

\paragraph{Sampling after solving the LP.}
\label{sec:pmf_lp_algorithm}
Once a feasible $\vec{\alpha}$ has been obtained, correlated Bernoulli draws can be generated by inverse transform sampling over the $2^N$ atoms.

\paragraph{Optional tightening and pre-checks.}
Before solving, one should first screen for obvious infeasibility by checking that each target pair $(\mu_i,\mu_j,\rho_{ij})$ satisfies the Bernoulli bounds in \Cref{eqn:bernoulli_correlation_feasibility_bounds}.
These pairwise checks are only necessary conditions; the LP itself is the global feasibility test.
If one allows tolerances or ranges for pairwise moments instead of hard equalities, the following linear cuts can be added to enforce bounds on $\mathbb{E}[X_iX_j]$:
\[
\max\{0,\,\mu_i+\mu_j-1\} \;\le\; \vec{\kappa}_{ij}^\mathsf{T} \vec{\alpha} \;\le\; \min\{\mu_i,\mu_j\}\quad (1\le i<j\le N),
\]
In the exact-matching PMF-LP these inequalities are redundant, but they remain useful as cheap pre-checks and as explicit cuts in relaxed or robust variants.

\paragraph{Computational complexity.}\label{sec:bernoulli_computation_complexity}
Given $\vec{\alpha}$, drawing random outcomes with the inverse transform sampling method is straightforward.
Accordingly, the computational complexity of the approach is dominated by setting up the LP and solving it.
The LP has $2^N$ decision variables (the entries of $\vec{\alpha}$), $1+N+\binom{N}{2}$ equality constraints, and $2^N$ nonnegativity bounds.
Interior-point methods solve LPs in time polynomial in the LP input size \cite{cohen2021solving}, but the LP input itself grows exponentially in $N$ because of the $2^N$ states.
Therefore, this formulation provides exact moment matching and feasibility certificates for moderate $N$, but it is not polynomial-time in $N$.

\Cref{appendix:example} illustrates the three-variable case explicitly. \Cref{appendix:moment_identities} contains the full PMF/CDF identities, proofs, and further extensions.

\subsection{Truncated moment completion and sequential conditioning}\label{sec:truncated_completion}
The exact PMF-LP is the right tool when one wants a definitive feasibility certificate over the full $2^N$ state space.
In some applications, however, the goal is instead to construct a compatible law from low-order information without materializing every atom.
This leads naturally to a reduced cross-moment representation.

For this subsection it is convenient to index atoms and moments by subsets of $[N]:=\{1,\dots,N\}$.
For each $A\subseteq[N]$, let
\[
p_A := \mathbb{P}(X_i=1 \text{ iff } i\in A),
\qquad
\gamma_A := \mathbb{E}\!\left[\prod_{i\in A} X_i\right],
\qquad
\gamma_\varnothing := 1.
\]
Here $p_A$ is the probability of the atom whose set of ones is exactly $A$, while $\gamma_A$ is the corresponding cross-moment.
The M\"obius-inversion identities linking $(p_A)$ and $(\gamma_A)$ are collected in \Cref{appendix:moment_identities}; below we use only the formulas needed for truncated completion.

Fix a truncation order $k\ge 2$.
We keep
\[
\gamma_\varnothing = 1,\qquad
\gamma_{\{i\}} = \mu_i,\qquad
\gamma_{\{i,j\}} = \mu_i\mu_j + \rho_{ij}\sqrt{\mu_i(1-\mu_i)\,\mu_j(1-\mu_j)},
\]
treat the moments $\gamma_A$ with $3\le |A|\le k$ as decision variables, and set $\gamma_A=0$ for $|A|>k$.
The reduced completion problem therefore replaces the $2^N$ atomic variables of the exact PMF-LP with only
$
\sum_{m=3}^{k}\binom{N}{m}
$
higher-order moment variables.
By the M\"obius inversion formulas in \Cref{appendix:moment_identities}, the implied atomic probabilities satisfy
\begin{equation}\label{eq:main_truncated_atom}
p_A
=
\sum_{\substack{B\supseteq A\\ |B|\le k}}
(-1)^{|B|-|A|}\gamma_B
\qquad (|A|\le k),
\end{equation}
and $p_A=0$ for $|A|>k$.
Hence one can solve the reduced feasibility problem
\[
\text{find } (\gamma_A)_{A\subseteq[N],\,3\le |A|\le k}
\]
subject to the nonnegativity constraints
\[
p_A \ge 0
\qquad \text{for all } A\subseteq[N]\text{ with }|A|\le k,
\]
where $p_A$ is defined by \eqref{eq:main_truncated_atom}.
This should be interpreted as a model class rather than a universal feasibility test: infeasibility may mean either that the target moments are globally impossible or that the chosen truncation order is too small.
The most aggressive case $k=2$ gives an explicit pairwise-only construction supported on outcomes with at most two ones; \Cref{appendix:moment_identities} records the closed forms and the associated limitations.

Once a reduced table $\Gamma_k=\{\gamma_A:|A|\le k\}$ has been fitted, one can sample without materializing the full PMF.
For disjoint sets $O,Z\subseteq[N]$, define the cylinder probability
\[
w(O,Z)
:=
\mathbb{P}(X_i=1\ \forall i\in O,\ X_j=0\ \forall j\in Z)
=
\sum_{\substack{B\subseteq Z\\ |O|+|B|\le k}}
(-1)^{|B|}\gamma_{O\cup B}.
\]
If the variables are sampled in an order $\pi(1),\dots,\pi(N)$ and $O_m,Z_m$ denote the realized ones and zeros after $m$ steps, then
\begin{equation}\label{eq:main_truncated_conditional}
\mathbb{P}\!\big(X_{\pi(m+1)}=1 \,\big|\, X_i=1\ \forall i\in O_m,\ X_j=0\ \forall j\in Z_m\big)
=
\frac{w(O_m\cup\{\pi(m+1)\},Z_m)}{w(O_m,Z_m)},
\end{equation}
whenever $w(O_m,Z_m)>0$.
For fixed $k$, each evaluation involves at most
$
M_k := \sum_{r=0}^{k}\binom{N}{r}
$
terms, so this sequential-conditioning sampler is polynomial in $N$ for fixed truncation order, though only within the truncated family.

\Cref{alg:main_truncated_sampler} summarizes the workflow.
\Cref{appendix:moment_identities} contains the full derivations, sharper complexity statements, and impossibility results showing that small $k$ cannot be universally sufficient.

\begin{figure}[h]
\centering
\refstepcounter{algorithm}\label{alg:main_truncated_sampler}
\fbox{%
\begin{minipage}{0.95\linewidth}
\small
\textbf{Algorithm~\thealgorithm. Order-$k$ truncated moment completion with sequential conditioning}

\textbf{Input:} target means $(\mu_i)_{i=1}^N$, target pairwise correlations $(\rho_{ij})_{1\le i<j\le N}$, truncation order $k$, number of samples $L$, and an ordering $\pi(1),\dots,\pi(N)$.

\begin{enumerate}[itemsep=0.25ex, topsep=0.4ex, leftmargin=2.4em]
\item Set $\gamma_\varnothing=1$, $\gamma_{\{i\}}=\mu_i$, and
\[
\gamma_{\{i,j\}}=\mu_i\mu_j+\rho_{ij}\sqrt{\mu_i(1-\mu_i)\,\mu_j(1-\mu_j)}
\qquad (1\le i<j\le N).
\]
\item Introduce $\gamma_A$ for all $A\subseteq[N]$ with $3\le |A|\le k$, set $\gamma_A=0$ for $|A|>k$, and solve
\[
\sum_{\substack{B\supseteq A\\ |B|\le k}}(-1)^{|B|-|A|}\gamma_B \ge 0
\qquad \text{for all } A\subseteq[N]\text{ with }|A|\le k.
\]
\item If the LP is infeasible, increase $k$ or abandon the truncated family. Otherwise store the fitted table $\Gamma_k$.
\item For each sample, initialize $O_0=Z_0=\varnothing$ and process variables in the order $\pi(1),\dots,\pi(N)$.
\item At step $m$, compute
\[
q_m=
\frac{
\sum_{\substack{B\subseteq Z_m\\ |O_m|+1+|B|\le k}}(-1)^{|B|}\gamma_{O_m\cup\{\pi(m+1)\}\cup B}
}{
\sum_{\substack{B\subseteq Z_m\\ |O_m|+|B|\le k}}(-1)^{|B|}\gamma_{O_m\cup B}
},
\]
draw $U_m\sim\operatorname{Unif}(0,1)$, and set $X_{\pi(m+1)}=1$ iff $U_m\le q_m$.
\item Update $O_m$ and $Z_m$ and continue until all coordinates are assigned.
\end{enumerate}

\textbf{Output:} samples from the fitted order-$k$ truncated model.
\end{minipage}%
}
\end{figure}

\subsection{Code}
We provide the code for PMF-LP and truncated moment completion with sequential conditioning at \url{https://github.com/chiheem/CorrelatedBinary}.
The repository also contains the code used to generate the simulations in \Cref{sec:simulation1}.

\section{Related Literature}
The literature most relevant to this paper can be grouped into four strands.

The first strand concerns representation formulas for multivariate Bernoulli laws.
Bahadur \cite{bahadur1961} expresses a multivariate Bernoulli law in terms of its marginals and higher-order interaction terms, while Teugels \cite{teugels1990} gives vectorized and Kronecker-product representations for multivariate Bernoulli and binomial laws.
More recently, Fontana and Semeraro \cite{fontana2018} study classes of multivariate Bernoulli distributions with specified moments and show how such classes can be characterized and sampled algorithmically.
Our appendix formulas belong to this representation tradition, but our emphasis is more operational and more narrowly focused: given low-order target moments, we want one compatible law or a certificate that none exists, with the PMF-LP as the central exact construction.

The second strand studies simulation procedures within specific model families.
Emrich and Piedmonte \cite{emrich1991} use a multivariate-normal thresholding construction, while Qaqish \cite{qaqish2003} proposes a sequential family of multivariate binary laws tailored to simulation with specified marginal means and correlations.
These methods are practically important, but they answer a narrower question than the one considered here.
They ask whether a target can be fitted inside a particular latent or conditional family.
Our PMF-LP is meant to answer the broader question of whether the target moments are feasible at all, independently of any chosen family.

The third strand concerns feasibility and extremal questions under partial moment information.
Chaganty and Joe \cite{chaganty2006} study admissible correlation ranges for Bernoulli vectors, showing that even pairwise feasibility is highly constrained.
Padmanabhan and Natarajan \cite{PadmanabhanNatarajan2021} derive sharp bounds on tail probabilities of $\sum_i X_i$ from bivariate Bernoulli marginals on a graph.
Their objective differs from ours: they bound aggregate-event probabilities, whereas we seek an explicit joint law on $\{0,1\}^N$ that can actually be sampled once the first two moments are prescribed.

The fourth strand consists of exchangeable or count-based models, which are especially relevant in credit applications.
Kadane \cite{kadane2016sums} studies the distribution of the sum itself and proposes the Conway--Maxwell--binomial family as a flexible count model that accommodates both positive and negative association.
These models are useful benchmarks for homogeneous portfolios and overdispersed count data, but they are too restrictive for our setting because our input is a full, generally non-exchangeable matrix of target moments and our goal is a multivariate simulator rather than only a law for the sum.

\section{Discussion}
This paper addresses a practical problem: how should one simulate Bernoulli vectors with prescribed means and pairwise correlations?
Our main contributions are a direct PMF formulation of this task and a linear program that decides whether the requested moments are jointly feasible and, when they are, produces an exact law that matches them.

CoMBO, while familiar to many readers, addresses only the subset of targets that admit a latent-Gaussian threshold representation.
The naive CoMBO generally misses the target Pearson correlations, and pairwise tetrachoric calibration is exact only when the calibrated latent matrix is a valid Gaussian correlation matrix.
A Bernoulli target may therefore be feasible while still failing to admit an exact Gaussian-threshold representation.
The numerical illustration highlights this multivariate consistency obstruction rather than any failure of bivariate calibration.

By contrast, the PMF-LP is exact and transparent.
It separates the feasibility question from the choice of model family and avoids imposing latent or parametric structure in advance.
Its main limitation is computational: because the state space has size $2^N$, exact global moment matching remains expensive in high dimensions even when the formulation itself is simple.

Two constructions address that bottleneck while remaining within the direct Bernoulli framework.
The truncated completion method works with cross-moments under an explicit low-order assumption, reducing the representation at the cost of restricting the admissible support.
The sparse-support refinement pursues the exact PMF-LP objective on a small working set of states.
The convex-hull result guarantees that some exact support has size at most $1+N+\binom{N}{2}$ whenever the target is feasible, although finding that support offers no worst-case escape from exponential complexity.

Several directions for future work appear especially natural.
First, it would be valuable to strengthen the exact formulations through symmetry reduction, sparse interaction structure, support reduction, cut generation, and column-generation ideas that avoid materializing all $2^N$ variables at once.
Second, it would also be useful to develop controlled relaxations that enforce critical moments exactly while allowing bounded error on others, together with adaptive rules for selecting the truncation order from the target data.
Third, the relationship between truncated and sparse-support representations deserves a more systematic empirical study, since both aim to compress the search space but do so through rather different structural assumptions.

Overall, exact global moment matching for high-dimensional Bernoulli vectors remains difficult.
This paper clarifies three separate issues: feasibility of the requested moments, exact representability within the Gaussian-threshold family, and the computational cost of constructing an exact law.
That distinction helps identify when familiar approximations are appropriate and where further algorithmic progress is most needed.

\vspace{24pt}
\subsection*{Disclaimers}
\paragraph{Independent work}
This is independent work by the authors, performed outside of their full-time jobs, and no support of any kind was provided by any organization. The views and opinions expressed in this work are solely those of the authors and do not represent the views or opinions of their employers.

\paragraph{Use of AI.}
Large language models, under the supervision of the authors, have been used to refine the article for clarity, check for logical consistency, and tighten the proofs.

\subsection*{Acknowledgements}
Chi Heem and Zied thank Andrew Ang for his valuable comments and feedback on the paper.

\clearpage
\bibliographystyle{plain}
\bibliography{references}

@article{cohen2021solving,
  title={Solving linear programs in the current matrix multiplication time},
  author={Cohen, Michael B and Lee, Yin Tat and Song, Zhao},
  journal={Journal of the ACM (JACM)},
  volume={68},
  number={1},
  pages={1--39},
  year={2021},
  publisher={ACM New York, NY, USA}
}

@inproceedings{sklar1959fonctions,
  title={Fonctions de r{\'e}partition {\`a} n dimensions et leurs marges},
  author={Sklar, M},
  booktitle={Annales de l'ISUP},
  volume={8},
  number={3},
  pages={229--231},
  year={1959}
}

@book{nelsen2006,
  author = {Nelsen, Roger B.},
  title = {An Introduction to Copulas},
  publisher = {Springer},
  year = {2006}
}

@book{joe2014,
  author = {Joe, Harry},
  title = {Dependence Modeling with Copulas},
  publisher = {CRC Press},
  year = {2014}
}

@article{triverdi2005copula,
  title={Copula modeling: An introduction for practitioners},
  author={Triverdi, PK and Zimmer, David M},
  journal={Foundations and Trends in Econometrics},
  volume={1},
  number={1},
  pages={1--111},
  year={2005}
}

@article{chaganty2006,
  author = {Chaganty, Narasimhan R. and Joe, Harry},
  title = {Range of correlation matrices for dependent {B}ernoulli random variables},
  journal = {Biometrika},
  year = {2006}
}

@article{emrich1991,
  author = {Emrich, Laurie J. and Piedmonte, Marion R.},
  title = {A method for generating high-dimensional correlated binary variates with specified marginal means and correlations},
  journal = {The American Statistician},
  volume = {45},
  number = {4},
  pages = {302--304},
  year = {1991}
}

@article{qaqish2003,
  author = {Qaqish, Bahjat F.},
  title = {A family of multivariate binary distributions for simulating correlated binary variables with specified marginal means and correlations},
  journal = {Biometrika},
  volume = {90},
  number = {2},
  pages = {455--463},
  year = {2003},
  doi = {10.1093/biomet/90.2.455}
}

@article{bahadur1961,
  author = {Bahadur, Raghu Raj},
  title = {A representation of the joint distribution of responses to $n$ dichotomous items},
  journal = {The Annals of Mathematical Statistics},
  year = {1961}
}

@techreport{cario1997norta,
  author = {Cario, Celso M. and Nelson, Barry L.},
  title = {Modeling and Generating Random Vectors with Arbitrary Marginal Distributions and Correlation Matrix},
  institution = {Department of Industrial Engineering and Management Sciences, Northwestern University},
  year = {1997},
  note = {Technical Report}
}

@article{iman1982,
  author = {Iman, Ronald L. and Conover, William J.},
  title = {A Distribution-Free Approach to Inducing Rank Correlation Among Input Variables},
  journal = {Communications in Statistics - Simulation and Computation},
  volume = {11},
  number = {3},
  pages = {311--334},
  year = {1982}
}

@article{Vorobev1962,
  author  = {Nikolai N., Vorob'ev},
  title   = {Consistent Families of Measures and Their Extensions},
  journal = {Theory of Probability and its Applications},
  volume  = {7},
  number  = {2},
  pages   = {147--163},
  year    = {1962},
  doi     = {10.1137/1107014}
}

@article{Hoeffding1963,
  author  = {Wassily Hoeffding},
  title   = {Probability Inequalities for Sums of Bounded Random Variables},
  journal = {Journal of the American Statistical Association},
  volume  = {58},
  number  = {301},
  pages   = {13--30},
  year    = {1963},
  doi     = {10.1080/01621459.1963.10500830}
}

@article{PadmanabhanNatarajan2021,
  author  = {Divya Padmanabhan and Karthik Natarajan},
  title   = {Tree Bounds for Sums of {B}ernoulli Random Variables: A Linear Optimization Approach},
  journal = {INFORMS Journal on Optimization},
  volume  = {3},
  number  = {1},
  pages   = {23--45},
  year    = {2021}
}

@article{teugels1990,
  author  = {Teugels, Jozef L.},
  title   = {Some representations of the multivariate {B}ernoulli and binomial distributions},
  journal = {Journal of Multivariate Analysis},
  volume  = {32},
  number  = {2},
  pages   = {256--268},
  year    = {1990},
  doi     = {10.1016/0047-259X(90)90084-U}
}

@article{fontana2018,
  author  = {Fontana, Roberto and Semeraro, Patrizia},
  title   = {Representation of multivariate {B}ernoulli distributions with a given set of specified moments},
  journal = {Journal of Multivariate Analysis},
  volume  = {168},
  pages   = {290--303},
  year    = {2018},
  doi     = {10.1016/j.jmva.2018.08.003}
}

@article{kadane2016sums,
  author = {Kadane, Joseph B.},
  title = {Sums of Possibly Associated {B}ernoulli Variables: The {Conway--Maxwell--Binomial} Distribution},
  journal = {Bayesian Analysis},
  volume = {11},
  number = {2},
  pages = {403--420},
  year = {2016},
  doi = {10.1214/15-BA955}
}

\clearpage
\appendix
\renewcommand{\thesection}{\Alph{section}}
\crefalias{section}{appendix}
\section{Comparison of the Main Constructions}\label{appendix:comparison}
\Cref{table:method_comparison} summarizes the main constructions discussed in the paper and highlights the trade-offs among exactness, representation, and computational cost.
The point is not that one method dominates the others, but that they answer slightly different versions of the same modeling problem.

\begin{table}[!h]
\centering
\small
\setlength{\tabcolsep}{4pt}
\renewcommand{\arraystretch}{1.18}
\caption{Comparison of the main constructions discussed in the paper.}
\label{table:method_comparison}
\resizebox{\textwidth}{!}{
\begin{tabular}{p{0.15\linewidth} p{0.2\linewidth} p{0.18\linewidth} p{0.22\linewidth} p{0.23\linewidth}}
\toprule
\textbf{Construction} & \textbf{Representation} & \textbf{Exact wrt target moments?} & \textbf{Main computational profile} & \textbf{Most useful when / main caveat} \\
\midrule

CoMBO (naive or pairwise-calibrated) & Latent Gaussian law plus Bernoulli thresholding at probit cutoffs & Yes when the tetrachorically calibrated latent matrix is PSD; otherwise no & Pairwise root-finding plus a PSD check on the assembled latent matrix; any PSD repair sacrifices exactness & Useful when an exact Gaussian-threshold representation exists or as a familiar baseline. Main caveat: many feasible Bernoulli targets are not exactly representable this way. \\

\hline
PMF-LP & Full atomic PMF $(p_x)_{x\in\{0,1\}^N}$ or $\vec{\alpha}\in\mathbb{R}^{2^N}$ & Yes, when feasible & LP over $2^N$ variables with $1+N+\binom{N}{2}$ equality constraints & Best when exact feasibility certificates matter and $N$ is moderate. Main caveat: exponential state space. \\

Truncated completion + sequential conditioning & Cross-moment table $(\gamma_A)$ with $\gamma_A=0$ for $|A|>k$, followed by conditional sampling & Exact only within the chosen truncated family & Reduced LP over moments up to order $k$, then sampling via conditional ratios computed from the fitted table & Useful when a compact low-order representation is acceptable. Main caveat: infeasibility may reflect an overly small truncation order rather than true global impossibility. \\

Sparse-support refinement & Working set of candidate states with weights updated by restricted master LPs and pricing heuristics & Potentially yes, if the working set reaches an exact fit & Avoids storing all $2^N$ states at once, but may still require many iterations or hard pricing steps & Useful when the true or near-feasible law is expected to have small support. Main caveat: no worst-case escape from exponential behavior. \\
\bottomrule
\end{tabular}
}
\end{table}

\newpage
\section{A Simple CoMBO Illustration}\label{sec:simulation1}
This appendix records a simple three-variable experiment illustrating two points from the main text: naive CoMBO can miss the target correlations badly, and pairwise calibration alone does not guarantee an exact multivariate CoMBO generator.
We attempt to simulate three Bernoulli variables $X_1, X_2, X_3$ with the following means and correlations:

\begin{equation*}
\vec{\mu} = \left( 
\begin{array}{ccc}
0.2 & 0.7 & 0.6 \\
\end{array} \right) ^\mathsf{T}
\end{equation*}

\begin{equation*}
\mathbf{R} = \left( 
\begin{array}{ccc}
1   & 0.1 & 0.4\\
0.1 & 1   & 0.8\\
0.4 & 0.8 & 1
\end{array} \right)
\end{equation*}

In each replication, we generate 10,000 realizations using naive CoMBO and estimate the sample means and correlation matrix.
We repeat the procedure 1,000 times and report the mean and standard deviation of those estimates.
We then repeat the experiment using pairwise-calibrated CoMBO.
The point of the calibrated experiment is not that tetrachoric calibration is wrong in the bivariate sense; rather, it is to illustrate that pairwise calibration alone need not yield an exact realizable multivariate Bernoulli generator.
For comparison, we also generate continuous variables with logistic marginals, both with and without latent-correlation calibration.
The calibrated latent matrices are obtained by solving \Cref{eqn:bernoulli_latent_calibration} pairwise for the binary threshold model, and by solving the analogous pairwise root-finding problem for logistic marginals, where each transformed Pearson correlation is evaluated deterministically by Gauss--Hermite quadrature.

The results are shown in \Cref{table:simulation_results}.
Setting $\mathbf{R}_\text{lat} = \mathbf{R}$ leads to visible correlation error, and the effect is substantially larger for the Bernoulli threshold model than for the continuous logistic benchmark.
The calibrated Bernoulli experiment still does not recover the target multivariate correlation structure exactly, whereas the calibrated continuous experiment does.

A simple calculation shows that the target is pairwise Bernoulli-feasible, which we can verify using \Cref{eqn:bernoulli_correlation_feasibility_bounds}:
\[
\rho_{12}=0.1\in[-0.764,\;0.327],\qquad
\rho_{13}=0.4\in[-0.612,\;0.408],\qquad
\rho_{23}=0.8\in[-0.535,\;0.802].
\]

Solving the pairwise tetrachoric calibration equations in \Cref{eqn:bernoulli_latent_calibration} gives:
\[
\mathbf{R}_{\mathrm{lat}}^\star \approx
\begin{pmatrix}
1 & 0.1968 & 0.8557\\
0.1968 & 1 & 0.9910\\
0.8557 & 0.9910 & 1
\end{pmatrix}.
\]
For a $3\times 3$ correlation matrix with unit diagonal, positive semidefiniteness requires
\[
\det(\mathbf{R}_{\mathrm{lat}}^\star)
=
1+2r_{12}^\star r_{13}^\star r_{23}^\star-(r_{12}^\star)^2-(r_{13}^\star)^2-(r_{23}^\star)^2
\ge 0.
\]
Here,
\[
\det(\mathbf{R}_{\mathrm{lat}}^\star)\approx -0.41924 < 0,
\]
so the calibrated latent matrix is not PSD.
This explains the calibrated Bernoulli columns in \Cref{table:simulation_results}: the target is pairwise Bernoulli-feasible, but the uniquely calibrated latent pairs do not assemble into one valid multivariate Gaussian correlation matrix.
By contrast, naive CoMBO fails earlier because thresholding changes the Pearson correlations even before this multivariate PSD check is considered.

\begin{table}
\caption{
Summary of 1,000 iterations of 10,000 simulated realizations.
Setting $\mathbf{R}_\text{lat} = \mathbf{R}$ produces incorrect results, with larger errors for the Bernoulli threshold model than for the continuous logistic benchmark.
After calibration, the generated continuous variables recover the desired correlations whereas the generated discrete variables do not.
}
\label{table:simulation_results}
\centering
\vspace{0.75em}
\begin{tabular}{c | c | c c | c c | c c | c c }
\toprule
{} & {} & \multicolumn{4}{c|}{Bernoulli}  & \multicolumn{4}{c}{Logistic}\\
{} & {} & \multicolumn{2}{c|}{Uncalibrated} & \multicolumn{2}{c|}{Calibrated} & \multicolumn{2}{c|}{Uncalibrated} & \multicolumn{2}{c}{Calibrated}\\
\cline{3-10}
{} & {Target} & Mean & S.D. & Mean & S.D. & Mean & S.D. & Mean & S.D.\\
\cline{1-10}
$p_1$ & 0.200 & 0.200 & 0.004 & 0.200 & 0.004 & 0.199 & 0.018 & 0.200 & 0.018 \\
$p_2$ & 0.700 & 0.700 & 0.005 & 0.700 & 0.004 & 0.700 & 0.018 & 0.700 & 0.018 \\
$p_3$ & 0.600 & 0.600 & 0.005 & 0.600 & 0.005 & 0.599 & 0.018 & 0.600 & 0.018 \\
\hline
$\rho_{12}$ & 0.100 & 0.052 & 0.009 & 0.135 & 0.009 & 0.099 & 0.010 & 0.100 & 0.010 \\
$\rho_{13}$ & 0.400 & 0.212 & 0.009 & 0.367 & 0.006 & 0.397 & 0.008 & 0.400 & 0.008 \\
$\rho_{23}$ & 0.800 & 0.567 & 0.008 & 0.617 & 0.008 & 0.798 & 0.004 & 0.800 & 0.004 \\
\bottomrule
\end{tabular}
\end{table}

\clearpage
\section{A Three-Variable PMF Illustration}
\label{appendix:example}
We derive the constraints explicitly for three correlated Bernoulli variables in order to make the PMF construction concrete.
The joint distribution table is shown in \Cref{table:example_joint_dist}.

\begin{table}[h!]
\centering
\begin{tabular}{c | c c c | c}
\toprule
Index & $X_1$ & $X_2$ & $X_3$ & Probability \\
\midrule
0 & 0 & 0 & 0 & $\alpha_0$ \\
1 & 0 & 0 & 1 & $\alpha_1$ \\
2 & 0 & 1 & 0 & $\alpha_2$ \\
3 & 0 & 1 & 1 & $\alpha_3$ \\
4 & 1 & 0 & 0 & $\alpha_4$ \\
5 & 1 & 0 & 1 & $\alpha_5$ \\
6 & 1 & 1 & 0 & $\alpha_6$ \\
7 & 1 & 1 & 1 & $\alpha_7 = 1-\sum_{l=0}^{6}\alpha_l$ \\
\bottomrule
\end{tabular}
\caption{The probability mass function for the case of three binary variables $X_1, X_2, X_3$.}
\label{table:example_joint_dist}
\end{table}

The first moment conditions are given by \Cref{equation:first_moment_example}.
\begin{subequations}
\begin{align}
\mathbb{E}[X_1] = \alpha_4 + \alpha_5 + \alpha_6 + \alpha_7\\
\mathbb{E}[X_2] = \alpha_2 + \alpha_3 + \alpha_6 + \alpha_7\\
\mathbb{E}[X_3] = \alpha_1 + \alpha_3 + \alpha_5 + \alpha_7
\end{align}
\label{equation:first_moment_example}
\end{subequations}

The expectations of the products between pairs, $\mathbb{E}[X_iX_j]$, are given in \Cref{table:example_crossmoments}.
\begin{table}[!h]
\centering
\begin{tabular}{c|c c c}
\toprule
{}    & $X_1$             & $X_2$         & $X_3$ \\
\midrule
$X_1$ & $\alpha_4+\alpha_5+\alpha_6+\alpha_7$ & $\alpha_6+\alpha_7$               & $\alpha_5+\alpha_7$ \\
$X_2$ & $\alpha_6+\alpha_7$         & $\alpha_2 + \alpha_3 + \alpha_6 + \alpha_7$ & $\alpha_3+\alpha_7$ \\
$X_3$ & $\alpha_5+\alpha_7$         & $\alpha_3+\alpha_7$               & $\alpha_1 + \alpha_3 + \alpha_5 + \alpha_7$ \\
\bottomrule
\end{tabular}
\caption{$\mathbb{E}[X_iX_j]$ for various pairs in the case of three binary variables $X_1, X_2, X_3$.}
\label{table:example_crossmoments}
\end{table}

The covariance and correlation can then be written explicitly by substituting the relationships in \Cref{table:example_crossmoments} into \Cref{equation:variance,equation:correlation}.
\begin{equation}
\operatorname{Cov}(X_i, X_j) = \mathbb{E}[X_iX_j] - \mathbb{E}[X_i] \, \mathbb{E}[X_j] \qquad \forall i,j \in \{1,\dots,N\}
\label{equation:variance}
\end{equation}

\begin{equation}
\operatorname{Corr}(X_i, X_j) = \frac{\operatorname{Cov}(X_i,X_j)}{\sqrt{\operatorname{Var}(X_i)\operatorname{Var}(X_j)}} \qquad \forall i,j \in \{1,\dots,N\}
\label{equation:correlation}
\end{equation}

\clearpage
\paragraph{Example problem.}
Suppose the user specifies the following moments:
\begin{equation*}
\vec{\mu} = \left( 
\begin{array}{ccc}
0.3 & 0.5 & 0.6 \\
\end{array} \right) ^\mathsf{T}
\end{equation*}

\begin{equation*}
\mathbf{R} = \left( 
\begin{array}{ccc}
1 & 0.2 & 0.3\\
0.2& 1& 0.6\\
0.3& 0.6& 1
\end{array} \right)
\end{equation*}

One solution returned by the solver is:

\begin{equation*}
\vec{\alpha} = \left(
\begin{array}{r}
0.3213\\
0.0745\\
0.0260\\
0.2781\\
0.0257\\
0.0785\\
0.0270\\
0.1688
\end{array}
\right)
\end{equation*}

One can verify directly, by substituting these values into \Cref{equation:first_moment_example,equation:correlation}, that the stated means and correlations are recovered.

\paragraph{Example for simulating outcomes.}
Since $\vec{\alpha}$ defines the PMF over the states, inverse transform sampling can be applied.
Concretely, one draws a random number $u \sim U[0,1]$ and then computes $\inf\lbrace x: F(x)\geq u\rbrace$, as illustrated in the table below:

\begin{table*}[!h]
\centering
\begin{tabular}{rc}
\toprule
Generated number $u$ & Corresponding state  \\
{} & ($X_1 X_2 X_3$)\\
\midrule
$u < 0.3213$ & 000 \\
$0.3213 < u \leq 0.3958$ & 001 \\
$0.3958 < u \leq 0.4218$ & 010 \\
$0.4218 < u \leq 0.6999$ & 011 \\ 
$0.6999 < u \leq 0.7256$ & 100 \\
$0.7256 < u \leq 0.8041$ & 101 \\
$0.8041 < u \leq 0.8311$ & 110 \\
$0.8311 < u \leq 1.0000$ & 111 \\
\bottomrule
\end{tabular}
\end{table*}

\newpage
\section{Moment Identities and Reconstruction Formulas}\label{appendix:moment_identities}

This appendix collects the full identities linking the joint PMF and CDF of a Bernoulli vector to its cross-moments, together with proofs and several extensions.
In the main text we impose only the low-order constraints supplied by the user, namely means and pairwise correlations.
Here we record the complementary setting in which the entire cross-moment table is specified.

For a Bernoulli vector $X=(X_1,\ldots,X_N)\in\{0,1\}^N$, the joint PMF assigns a mass $p_{x_1,\ldots,x_N}$ to each of the $2^N$ outcomes in $\{0,1\}^N$.
Accordingly, a full specification of the law requires $2^N$ numbers subject to normalization.
The results below show how those atomic probabilities are reconstructed from the complete cross-moment table and when the resulting table is valid.

\subsection{Problem formulation}

We show here that the joint distribution of $N$ Bernoulli random variables is uniquely determined by its moments of order less than or equal to $N$.

Throughout this section, let $[N] := \{1,\ldots,N\}$.

We use the following notation throughout this section:
\begin{itemize}
    \item $p_i = \mathbb{P}(X_i = 1)$ denotes the marginal distribution of a Bernoulli random variable, with $p_i \in [0,1]$.
    \item $p_{x_1, \ldots, x_N} = \mathbb{P}(X_1 = x_1, \ldots, X_N=x_N)$, with $(x_1, \ldots, x_N) \in \{0,1\}^N$, denotes the joint PMF of the $N$ Bernoulli random variables. 
    \item $\gamma_i = \mathbb{E}[X_i] = p_i$ denotes the first order moments of a Bernoulli random variable. There are $\binom{N}{1} = N$ such moments.
    \item $\gamma_{i,j} = \mathbb{E}[X_i X_j]$ denotes the second-order cross-moment of a pair of Bernoulli random variables. There are $\binom{N}{2}$ such moments.
    \item $\gamma_{i,j,k} = \mathbb{E}[X_i X_j X_k]$ denotes the third-order cross-moment of a triplet of Bernoulli random variables. There are $\binom{N}{3}$ such moments.
    \item In general, $\gamma_{i_1,\ldots,i_k} = \mathbb{E}[X_{i_1} \cdots X_{i_k}]$ denotes the $k^{\rm th}$-order cross-moment of a $k$-tuple of Bernoulli random variables. There are $\binom{N}{k}$ such moments.
    \item For any subset $S=\{i_1,\ldots,i_k\}\subseteq [N]$, we write
    \[
        \gamma_S := \gamma_{i_1,\ldots,i_k} = \mathbb{E}\Big[\prod_{i\in S} X_i\Big],
    \]
    with the convention $\gamma_\varnothing := 1$.
    \item For any subset $S\subseteq [N]$, we write
    \[
        p_S := \mathbb{P}(X_i=1 \text{ iff } i\in S).
    \]
    This is the same quantity as $p_{x_1,\ldots,x_N}$ for the unique $x\in\{0,1\}^N$ such that $S=\{i:x_i=1\}$.
\end{itemize}
Throughout this section, we define $\Gamma$ as the collection of cross-moments
\[
    \Gamma := \{\gamma_S : S\subseteq [N]\},
\]
where $\gamma_\varnothing = 1$ is the zeroth-order moment condition.
Hence, $|\Gamma| = 2^N$.

\begin{lemma}\label{lemma:momCond}
Suppose that all elements of $\Gamma$ are given. The set of moment conditions up to order $N$ provides a total of
\begin{equation}
    \binom{N}{1} + \binom{N}{2} + \ldots + \binom{N}{N} = \sum_{k=0}^N \binom{N}{k} - 1 = 2^N - 1
\end{equation}
equations, and including the normalization constraint $\sum_{S\subseteq [N]} p_S = 1$ provides us with an additional equation leading to a total of $2^N$ equations. Henceforth, we refer to this set of $2^N$ equations as the \emph{moment equations} induced by cross-moments $\Gamma$:
\begin{equation}\label{eq:momEq}
    \gamma_S = \sum_{T\supseteq S} p_T,
\end{equation}
for all $S\subseteq [N]$. When $S=\varnothing$, we recover $\gamma_\varnothing = \sum_{T\subseteq [N]} p_T = 1$.
\end{lemma}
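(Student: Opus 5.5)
The statement has two parts: a counting claim and the identity \eqref{eq:momEq}. I would prove them separately, starting with the identity since the count then follows by indexing equations by subsets.

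For the identity, fix $S\subseteq[N]$. The plan is to write the product of indicators as a sum over atoms. For each outcome $x\in\{0,1\}^N$, let $T(x)=\{i:x_i=1\}$. Then $\prod_{i\in S}x_i=1$ exactly when every coordinate in $S$ equals one, that is, when $T(x)\supseteq S$; otherwise the product is $0$. Taking expectations over the finite law gives
\[
\gamma_S=\mathbb{E}\Big[\prod_{i\in S}X_i\Big]=\sum_{x\in\{0,1\}^N}p_{x}\,\mathbbm{1}\{T(x)\supseteq S\}=\sum_{T\supseteq S}p_T.
\]
Here I use that $x\mapsto T(x)$ is a bijection between $\{0,1\}^N$ and subsets of $[N]$, which is the correspondence already fixed in the notation $p_S$. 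For $S=\varnothing$ the product is empty and equals $1$, so the same computation gives $\gamma_\varnothing=\sum_{T\subseteq[N]}p_T=1$, which is exactly the normalization constraint.

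For the count, the equations are indexed by the nonempty subsets $S$, one for each prescribed moment $\gamma_S$ of order $k=|S|$. There are $\binom{N}{k}$ subsets of size $k$, so there are $\sum_{k=1}^N\binom{N}{k}$ moment equations. By the binomial theorem $\sum_{k=0}^N\binom{N}{k}=(1+1)^N=2^N$; subtracting the $k=0$ term gives $2^N-1$. The normalization equation is the $S=\varnothing$ instance of \eqref{eq:momEq}, so adding it gives $2^N$ equations in total, one per subset of $[N]$.

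No step here is a genuine obstacle. The only point needing care is bookkeeping: the normalization equation must be identified with the $S=\varnothing$ case so that it is counted exactly once and the total is $2^N$ rather than $2^N-1$ or $2^N+1$. The lemma only counts equations and makes no claim that they are independent; that is the unitriangularity and Möbius inversion argument taken up in the subsequent results.
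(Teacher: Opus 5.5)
Your proposal is correct and follows essentially the same argument as the paper. Both write $\prod_{i\in S}X_i$ as the indicator that the set of ones contains $S$, sum over atoms to get $\gamma_S=\sum_{T\supseteq S}p_T$, and count equations by subset size, with the $S=\varnothing$ case supplying normalization.
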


\begin{proof}
Fix any subset $S\subseteq [N]$. Since $X_i\in\{0,1\}$, the product $\prod_{i\in S} X_i$ equals $1$ exactly on those states whose set of ones contains $S$. Therefore,
\[
    \gamma_S
    = \mathbb{E}\Big[\prod_{i\in S} X_i\Big]
    = \sum_{T\subseteq [N]} p_T \prod_{i\in S} \mathbbm{1}\{i\in T\}
    = \sum_{T\supseteq S} p_T.
\]
For each $k\in\{1,\ldots,N\}$ there are $\binom{N}{k}$ subsets of $[N]$ of size $k$, hence $\binom{N}{k}$ moment equations of order $k$. Summing over $k=1,\ldots,N$ gives $2^N-1$ nontrivial equations, and adding the case $S=\varnothing$ yields the normalization equation. Altogether we obtain $2^N$ equations.
\end{proof}

\subsection{Solving the moment equations}

In the next lemma, we show that, given the set of all cross-moments $\Gamma$, the moment equations (\Cref{eq:momEq}) uniquely characterize the joint PMF of the $N$ Bernoulli random variables.

\begin{lemma}\label{lemma:momEq}
Given $\Gamma$, there exists a unique solution to the moment equations (\Cref{eq:momEq}). This solution uniquely characterizes the joint PMF of the $N$ Bernoulli random variables.
\end{lemma}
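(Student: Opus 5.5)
The plan is to treat the moment equations of \Cref{lemma:momCond}, $\gamma_S=\sum_{T\supseteq S}p_T$ for all $S\subseteq[N]$, as a square linear system of $2^N$ equations in the $2^N$ unknowns $(p_T)_{T\subseteq[N]}$. We then exhibit its inverse explicitly by Möbius inversion on the Boolean lattice. Concretely, I would show that
\[
p_A=\sum_{B\supseteq A}(-1)^{|B|-|A|}\gamma_B\qquad\text{for all }A\subseteq[N]
\]
is a solution, and that it is the only one. This is the full (untruncated) version of \eqref{eq:main_truncated_atom}.

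\textbf{Existence.} Substitute the moment equations into the right-hand side and swap the order of summation:
\[
\sum_{B\supseteq A}(-1)^{|B|-|A|}\sum_{T\supseteq B}p_T=\sum_{T\supseteq A}p_T\sum_{A\subseteq B\subseteq T}(-1)^{|B\setminus A|}.
\]
The inner sum runs over subsets $C=B\setminus A$ of $T\setminus A$. It therefore equals $(1-1)^{|T\setminus A|}$, which is $1$ if $T=A$ and $0$ otherwise. This binomial cancellation is the only genuinely computational step. Running the same computation in the other direction shows that, for an arbitrary table $\Gamma$, the values $p_A$ defined by the inversion formula satisfy every moment equation.

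\textbf{Uniqueness.} Order subsets by decreasing cardinality. In that order the coefficient matrix of the moment equations (entry $1$ when $T\supseteq S$) is triangular with unit diagonal, hence invertible. Equivalently, one can do downward induction on $|S|$. For $S=[N]$ we get $p_{[N]}=\gamma_{[N]}$. For smaller $S$ we get $p_S=\gamma_S-\sum_{T\supsetneq S}p_T$, where every term on the right is already determined. Either argument shows that any two solutions coincide.

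\textbf{Link to the law.} A joint PMF on $\{0,1\}^N$ is exactly the vector $(p_T)$, and by the proof of \Cref{lemma:momCond} the moments of any Bernoulli law satisfy the moment equations. So if $\Gamma$ comes from some Bernoulli vector, that vector's PMF must be the unique solution. Hence the cross-moments determine the law.

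\textbf{Main obstacle.} The statement as phrased is slightly delicate. For an arbitrary $\Gamma$, the unique solution is only a signed vector summing to $1$ (via $\gamma_\varnothing=1$). It is a genuine PMF precisely when every $p_A\ge0$. I would state this caveat explicitly: uniqueness holds unconditionally, while validity as a PMF requires nonnegativity of the inversion formula. That caveat is presumably what the subsequent validity results address.
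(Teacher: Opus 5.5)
Your proof is correct, and its core is the paper's own argument. The coefficient matrix with entries $\mathbbm{1}\{S\subseteq T\}$ is triangular with unit diagonal under a cardinality ordering (the paper orders by increasing size, you by decreasing), so this square system is invertible and existence and uniqueness follow together. That makes your explicit M\"obius inversion redundant here, though harmless: it anticipates \Cref{prop:PMF}, and your nonnegativity caveat is exactly \Cref{corr:PMFvalid}.

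One small labeling point: the computation you display under ``Existence'' (substituting $\gamma_B=\sum_{T\supseteq B}p_T$ into the inversion formula) shows that any solution must equal the formula, which is uniqueness. It is the reversed computation, $\sum_{T\supseteq S}p_T=\gamma_S$ for the formula-defined $p$, that gives existence.
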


\begin{proof}
Write the unknown PMF values as a vector $p=(p_T)_{T\subseteq [N]}$ and the moments as $\gamma=(\gamma_S)_{S\subseteq [N]}$. Then \eqref{eq:momEq} can be written as
\[
    \gamma = M p,
\]
where the $2^N\times 2^N$ matrix $M$ is indexed by subsets of $[N]$ and has entries
\[
    M_{S,T} = \mathbbm{1}\{S\subseteq T\}.
\]
Order both the rows and the columns by increasing subset cardinality (breaking ties arbitrarily). If $M_{S,T}=1$, then $S\subseteq T$, hence $|S|\le |T|$. Therefore every nonzero entry lies on or above the diagonal, so $M$ is upper triangular. Moreover, $M_{S,S}=1$ for every $S$, and if $|S|=|T|$ with $S\subseteq T$, then necessarily $S=T$, so the diagonal entries are the only nonzero entries on each equal-cardinality block.

Thus $M$ is upper triangular with ones on the diagonal, and therefore
\[
    \det(M)=1\neq 0.
\]
Hence $M$ is invertible, so the system \eqref{eq:momEq} has a unique solution. Since the unknowns are exactly the $2^N$ PMF values, this unique solution uniquely characterizes the joint distribution.
\end{proof}

\subsection{Joint probability mass function}

We now provide a closed-form expression for the joint PMF of the $N$ Bernoulli random variables.

\begin{proposition}\label{prop:PMF}
Given the collection of cross-moments $\Gamma$, the joint PMF of $N$ Bernoulli random variables at $x = (x_1,\ldots,x_N) \in \{0,1\}^N$ is
\begin{equation}\label{eq:pmf_mobius}
    p_x \,=\, \sum_{T \subseteq Z(x)} (-1)^{|T|}\, \gamma_{A(x) \cup T},
\end{equation}
where $A(x) = \{ i : x_i = 1 \}$ and $Z(x) = \{ i : x_i = 0 \}$ form a partition of $\{1,\ldots,N\}$. Equivalently,
\begin{equation}
    p_x \,=\, \sum_{S \supseteq A(x)} (-1)^{|S|-|A(x)|} \, \gamma_S,
\end{equation}
which is the M\"obius inversion of the zeta transform on the Boolean lattice (see, e.g., \cite{bahadur1961}).
\end{proposition}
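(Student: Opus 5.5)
The plan is to invert the moment equations of \Cref{lemma:momCond} explicitly rather than through the triangular matrix of \Cref{lemma:momEq}. That lemma already guarantees that the system $\gamma_S=\sum_{T\supseteq S}p_T$ has a unique solution. It therefore suffices to substitute the moment equations into the proposed right-hand side and check that it collapses to $p_x$; uniqueness then identifies the formula with the joint PMF.

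Fix $x$ and write $A=A(x)$, $Z=Z(x)$. Substituting \eqref{eq:momEq} into the first form gives
\[
\sum_{T\subseteq Z}(-1)^{|T|}\gamma_{A\cup T}=\sum_{T\subseteq Z}(-1)^{|T|}\sum_{U\supseteq A\cup T}p_U .
\]
I will exchange the order of summation. A set $U$ contributes only when $U\supseteq A$, and for such $U$ the admissible $T$ are exactly the subsets of $U\cap Z=U\setminus A$. The coefficient of $p_U$ is therefore $\sum_{T\subseteq U\setminus A}(-1)^{|T|}$. By the binomial theorem this equals $(1-1)^{|U\setminus A|}$, which is $1$ if $U=A$ and $0$ otherwise. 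The whole sum thus reduces to $p_A=p_x$.

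An alternative probabilistic route gives the same identity directly. One can write $\mathbbm{1}\{X=x\}=\prod_{i\in A}X_i\prod_{j\in Z}(1-X_j)$, expand the second product as $\sum_{T\subseteq Z}(-1)^{|T|}\prod_{j\in T}X_j$, and take expectations. This yields \eqref{eq:pmf_mobius} without even appealing to uniqueness; I would present it as a remark.

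The equivalent second form follows by the reindexing $S=A\cup T$. Since $T\subseteq Z$ is disjoint from $A$, this map is a bijection between subsets $T\subseteq Z$ and sets $S\supseteq A$, with $|T|=|S|-|A|$. The only real obstacle is the bookkeeping in the exchange of summations, namely checking that $T\subseteq Z$ together with $U\supseteq A\cup T$ is equivalent to $U\supseteq A$ together with $T\subseteq U\setminus A$. The remark identifying the formula as M\"obius inversion on the Boolean lattice then needs no separate proof, since the Boolean lattice's M\"obius function is $(-1)^{|S|-|A|}$.
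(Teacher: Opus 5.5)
Your proof is correct, but your main route is not the paper's. The paper's proof is exactly the argument you set aside as a remark: it writes $\mathbbm{1}\{X=x\}=\prod_{i\in A(x)}X_i\prod_{j\in Z(x)}(1-X_j)$, expands the second product, takes expectations, and reindexes via $S=A(x)\cup T$.

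Your primary argument instead substitutes the zeta relations $\gamma_S=\sum_{T\supseteq S}p_T$ into the candidate formula. It then kills every coefficient except that of $p_A$ using $\sum_{T\subseteq U\setminus A}(-1)^{|T|}=(1-1)^{|U\setminus A|}=\mathbbm{1}\{U=A\}$, with $0^0=1$. This mirrors the computation in the converse half of \Cref{corr:PMFvalid}, where the paper substitutes the M\"obius formula into the zeta relations. Together the two computations show that the zeta and M\"obius maps are mutually inverse on all of $\mathbb{R}^{2^N}$.

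Your route therefore buys two things. It proves the ``M\"obius inversion'' label rather than citing it. It also establishes the identity for any real vector satisfying the moment equations, not only for genuine PMFs. The paper's route is shorter and self-contained: it needs no exchange of summations and no appeal to \Cref{lemma:momCond} or \Cref{lemma:momEq}.

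One minor point: invoking uniqueness from \Cref{lemma:momEq} is unnecessary in your argument. Your computation already shows that every solution $p$ of the moment equations has $p_x$ equal to the right-hand side. The true PMF is such a solution by \Cref{lemma:momCond}, so uniqueness comes out as a by-product.
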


\begin{proof}
We have $\mathbbm{1}\{X=x\} = \prod_{i\in A(x)} X_i \prod_{j\in Z(x)} (1-X_j)$. Expanding the product over $Z(x)$ yields
\[
\prod_{j\in Z(x)} (1-X_j) = \sum_{T\subseteq Z(x)} (-1)^{|T|} \prod_{j\in T} X_j.
\]
Taking expectations gives
\[
 p_x = \mathbb{E}[\mathbbm{1}\{X=x\}] = \sum_{T\subseteq Z(x)} (-1)^{|T|} \, \mathbb{E}\big[\prod_{i\in A(x)} X_i \prod_{j\in T} X_j\big]
 = \sum_{T\subseteq Z(x)} (-1)^{|T|} \, \gamma_{A(x)\cup T}.
\]
This proves \eqref{eq:pmf_mobius}. The equivalent form obtained by indexing over $S = A(x)\cup T$ follows immediately. This is precisely the M\"obius inversion corresponding to the identities $\gamma_S = \sum_{x:\,A(x)\supseteq S} p_x$.
\end{proof}

\begin{corollary}[Validity of $\Gamma$ via nonnegativity of the implied PMF]\label{corr:PMFvalid}
Let $\Gamma=\{\gamma_S:S\subseteq [N]\}$ with $\gamma_\varnothing=1$. Then $\Gamma$ is realizable as the
moment collection of an $N$-dimensional Bernoulli vector if and only if the values $p_x$ defined by
\eqref{eq:pmf_mobius} satisfy $p_x\ge 0$ for all $x\in\{0,1\}^N$. In that case the resulting PMF is unique.
\end{corollary}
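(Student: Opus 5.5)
The plan is to combine \Cref{lemma:momEq} with \Cref{prop:PMF}, using the fact that the moment map is a bijection between vectors $(p_x)_{x\in\{0,1\}^N}$ and tables $\Gamma$. Given any table $\Gamma$ with $\gamma_\varnothing=1$, define the numbers $p_x$ by \eqref{eq:pmf_mobius}. The first step is to check that these $p_x$ solve the moment equations \eqref{eq:momEq}. Uniqueness in \Cref{lemma:momEq} gives that the linear system $\gamma = Mp$ has the unique solution $p=M^{-1}\gamma$. The expansion in the proof of \Cref{prop:PMF} shows that the inverse is given entrywise by the Möbius formula. To make this self-contained, I will also verify directly that
\[
\sum_{T\supseteq S} p_T=\sum_{U\supseteq S}\gamma_U\sum_{S\subseteq T\subseteq U}(-1)^{|U|-|T|}=\gamma_S .
\]
The inner alternating sum vanishes unless $U=S$, by the binomial identity $\sum_{j}\binom{m}{j}(-1)^j=0$ for $m\ge1$. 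Taking $S=\varnothing$ then gives $\sum_x p_x=\gamma_\varnothing=1$.

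For the direction ``$\Leftarrow$'', assume $p_x\ge 0$ for all $x$. Together with $\sum_x p_x=1$, this makes $(p_x)$ a probability mass function on $\{0,1\}^N$. Let $X$ be distributed according to it. By the computation in the proof of \Cref{lemma:momCond}, $\mathbb{E}[\prod_{i\in S}X_i]=\sum_{T\supseteq S}p_T=\gamma_S$ for every $S$, so $\Gamma$ is realized.

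For the direction ``$\Rightarrow$'', suppose some Bernoulli vector $X$ has cross-moment table $\Gamma$. Its true PMF $q_x=\mathbb{P}(X=x)$ then satisfies \eqref{eq:momEq} by \Cref{lemma:momCond}, so $q=M^{-1}\gamma$. Equivalently, \Cref{prop:PMF} applied to $X$ gives $q_x=\sum_{T\subseteq Z(x)}(-1)^{|T|}\gamma_{A(x)\cup T}=p_x$. Hence $p_x=q_x\ge0$.

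Uniqueness of the law follows from the same invertibility: any two realizing laws both solve $\gamma=Mp$, so they coincide by \Cref{lemma:momEq}. There is no real obstacle in this argument. The only point needing care is the first step, namely confirming that the Möbius formula actually inverts $M$ for arbitrary $\Gamma$, not merely for tables already known to come from a distribution. The proof of \Cref{prop:PMF} presupposes a random vector $X$, so it does not cover this case. That is why I do the direct binomial-cancellation check rather than cite it.
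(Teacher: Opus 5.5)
Your proposal is correct and follows essentially the same route as the paper. The forward direction goes through \Cref{prop:PMF}, the converse through the direct cancellation $\sum_{S\subseteq T\subseteq U}(-1)^{|U|-|T|}=\mathbbm{1}\{U=S\}$ showing the M\"obius values solve \eqref{eq:momEq} (with $S=\varnothing$ giving normalization), and uniqueness through \Cref{lemma:momEq}. Your remark that \Cref{prop:PMF} presupposes a random vector, and so cannot by itself justify the converse, is exactly why the paper also performs that direct check.
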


\begin{proof}
If $\Gamma$ comes from a Bernoulli vector $X$, then \Cref{prop:PMF} gives
\[
    p_x = \mathbb{P}(X=x)\ge 0
\]
for every $x$.

Conversely, suppose the values in \eqref{eq:pmf_mobius} are nonnegative. Reindex them as $(p_T)_{T\subseteq [N]}$. For any fixed $S\subseteq [N]$,
\begin{align*}
    \sum_{T\supseteq S} p_T
    &= \sum_{T\supseteq S}\ \sum_{U\supseteq T} (-1)^{|U|-|T|}\gamma_U \\
    &= \sum_{U\supseteq S} \gamma_U \sum_{T:\,S\subseteq T\subseteq U} (-1)^{|U|-|T|}.
\end{align*}
For a given $U\supseteq S$, the inner sum is
\[
    \sum_{T:\,S\subseteq T\subseteq U} (-1)^{|U|-|T|}
    = \sum_{V\subseteq U\setminus S} (-1)^{|V|}
    = (1-1)^{|U\setminus S|}
    = \mathbbm{1}\{U=S\}.
\]
Therefore $\sum_{T\supseteq S} p_T = \gamma_S$ for every $S\subseteq [N]$, so $p$ satisfies the moment equations. Taking $S=\varnothing$ gives
\[
    \sum_{T\subseteq [N]} p_T = \gamma_\varnothing = 1.
\]
Hence the nonnegative numbers $(p_T)$ form a PMF with moments $\Gamma$. Uniqueness follows from \Cref{lemma:momEq}.
\end{proof}

\paragraph{Remark on specification of $\Gamma$.}
The collection $\Gamma=\{\gamma_S:S\subseteq[N]\}$ contains all cross-moments up to order $N$, and hence
contains $2^N$ numbers (with $\gamma_\varnothing=1$ fixed). This full collection determines the joint law
uniquely. By contrast, in many applied settings one specifies only the marginals and a pairwise correlation
matrix. In general this pairwise information neither determines the full joint distribution nor fixes the
higher-order cross-moments. Thus, when only low-order information is prescribed, the main question is
whether there exists \emph{some} completion of the higher-order moments for which the implied atoms
$p_S$ are all nonnegative.

\subsection{Joint cumulative distribution function}

We now provide a closed-form expression for the joint CDF of the $N$ Bernoulli random variables.
For $t=(t_1,\dots,t_N)\in\mathbb{R}^N$, define
\[
F(t)\;:=\;\mathbb{P}(X_1\le t_1,\dots,X_N\le t_N).
\]
Note that, since $X_i\in\{0,1\}$, only the cases $t_i<0$, $0\le t_i<1$, and $t_i\ge 1$ matter.

\begin{proposition}[Closed-form joint CDF from cross-moments]\label{prop:CDF}
Let $t\in\mathbb{R}^N$ and define the index set
\[
C(t):=\{i\in[N]: t_i<1\}.
\]
If $t_i<0$ for some $i$, then $F(t)=0$. Otherwise,
\begin{equation}\label{eq:joint_cdf_gamma}
F(t)
\;=\;
\mathbb{E}\!\left[\prod_{i\in C(t)}(1-X_i)\right]
\;=\;
\sum_{B\subseteq C(t)}(-1)^{|B|}\,\gamma_B.
\end{equation}
Equivalently, in terms of the joint PMF $(p_S)$,
\begin{equation}\label{eq:joint_cdf_pmf}
F(t)
\;=\;
\sum_{A\subseteq [N]\setminus C(t)} p_A.
\end{equation}
In particular, for $x\in\{0,1\}^N$ with $S(x)=\{i:x_i=1\}$,
\[
F(x)
=
\mathbb{P}(X_i\le x_i\ \forall i)
=
\sum_{A\subseteq S(x)} p_A
=
\sum_{B\subseteq [N]\setminus S(x)} (-1)^{|B|}\,\gamma_B.
\]
\end{proposition}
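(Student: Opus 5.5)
The plan is to reduce everything to an expectation of an indicator product, in the same way as the proof of \Cref{prop:PMF}. I would handle the degenerate case first, then establish \eqref{eq:joint_cdf_gamma}, then read off \eqref{eq:joint_cdf_pmf}, and finally specialize to $t=x\in\{0,1\}^N$.

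\textbf{Degenerate case.} If $t_i<0$ for some $i$, then the event $\{X_i\le t_i\}$ is empty because $X_i\ge 0$. Hence $F(t)=0$.

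\textbf{Main identity.} Now assume $t_i\ge 0$ for all $i$. For each coordinate:
\begin{itemize}
\item if $i\notin C(t)$, then $t_i\ge 1$, so $\{X_i\le t_i\}$ holds surely;
\item if $i\in C(t)$, then $0\le t_i<1$, so $\{X_i\le t_i\}=\{X_i=0\}$, whose indicator is $1-X_i$.
\end{itemize}
Therefore $\mathbbm{1}\{X\le t\}=\prod_{i\in C(t)}(1-X_i)$, which gives the first equality in \eqref{eq:joint_cdf_gamma}. Expanding the product as $\sum_{B\subseteq C(t)}(-1)^{|B|}\prod_{i\in B}X_i$ and taking expectations by linearity yields $\sum_{B\subseteq C(t)}(-1)^{|B|}\gamma_B$, using the convention $\gamma_\varnothing=1$.

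\textbf{PMF form.} The event $\{X_i=0\ \forall i\in C(t)\}$ is the disjoint union of the atoms $\{X_i=1 \text{ iff } i\in A\}$ over $A\subseteq[N]\setminus C(t)$. Summing their probabilities gives \eqref{eq:joint_cdf_pmf}. As a consistency check, one can also derive it from \eqref{eq:joint_cdf_gamma} by substituting $\gamma_B=\sum_{T\supseteq B}p_T$ from \Cref{lemma:momEq}. The inner sum $\sum_{B\subseteq C(t)\cap T}(-1)^{|B|}$ equals $\mathbbm{1}\{T\cap C(t)=\varnothing\}$, by the same binomial cancellation used in \Cref{corr:PMFvalid}.

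\textbf{Specialization.} For $x\in\{0,1\}^N$ we have $C(x)=\{i:x_i=0\}=[N]\setminus S(x)$, and no coordinate is negative. Substituting into both forms gives the stated formulas.

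No step is a real obstacle. The only points needing care are the bookkeeping of the three regimes of $t_i$ and the empty-set convention $\gamma_\varnothing=1$, which is needed so that $C(t)=\varnothing$ correctly gives $F(t)=1$.
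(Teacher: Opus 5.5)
Your proposal is correct and follows essentially the same route as the paper. You dispose of the case $t_i<0$, identify the CDF event's indicator with $\prod_{i\in C(t)}(1-X_i)$, expand and take expectations, read off the PMF form from the set of ones lying in $[N]\setminus C(t)$, and specialize via $C(x)=[N]\setminus S(x)$; your extra M\"obius consistency check is valid but optional, and the paper's proof does not include it.
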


\begin{proof}
If $t_i<0$ for some $i$, then $\{X_i\le t_i\}$ is empty because $X_i\ge 0$ almost surely, hence $F(t)=0$.

Assume now that $t_i\ge 0$ for all $i$. For each coordinate,
\[
\{X_i\le t_i\}
=
\begin{cases}
\{X_i=0\}, & 0\le t_i<1,\\
\Omega, & t_i\ge 1.
\end{cases}
\]
Hence
\[
F(t)=\mathbb{P}(X_i=0\ \forall i\in C(t)).
\]
Since $X_i\in\{0,1\}$, the indicator of the event $\{X_i=0\ \forall i\in C(t)\}$ is
$\prod_{i\in C(t)}(1-X_i)$, so
\[
F(t)=\mathbb{E}\!\left[\prod_{i\in C(t)}(1-X_i)\right].
\]
Expanding the product yields
\[
\prod_{i\in C(t)}(1-X_i)
=
\sum_{B\subseteq C(t)}(-1)^{|B|}\prod_{b\in B}X_b,
\]
and taking expectations gives \eqref{eq:joint_cdf_gamma}.

For \eqref{eq:joint_cdf_pmf}, note that $X_i=0$ for all $i\in C(t)$ is equivalent to saying that the set
of ones $\{i:X_i=1\}$ is a subset of $[N]\setminus C(t)$, hence the probability is the sum of the
corresponding atomic masses $p_A$.
The last displayed identities follow by taking $t=x\in\{0,1\}^N$, for which $C(x)=[N]\setminus S(x)$.
\end{proof}

\begin{corollary}[Validity of the CDF formula]\label{corr:CDFvalid}
Assume $\gamma_\varnothing=1$ and let $(p_A)$ denote the atomic probabilities recovered from
\Cref{prop:PMF}. Then the expression in
\Cref{prop:CDF} defines the joint CDF of a Bernoulli vector if and only if $p_A\ge 0$ for all
$A\subseteq[N]$.
\end{corollary}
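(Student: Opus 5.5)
The plan is to reduce the claim to \Cref{corr:PMFvalid} together with the two identities of \Cref{prop:CDF}, proving the two directions separately.

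\textbf{Sufficiency.} Assume $p_A\ge 0$ for all $A\subseteq[N]$. By \Cref{corr:PMFvalid}, the family $(p_A)$ is a genuine PMF of some Bernoulli vector $X$ whose cross-moments are exactly $\Gamma$. For this $X$, \Cref{prop:CDF} applies verbatim: its derivation uses only that $X_i\in\{0,1\}$ and that $\gamma_B=\mathbb{E}[\prod_{b\in B}X_b]$. Hence the expression \eqref{eq:joint_cdf_gamma}, set to $0$ whenever some $t_i<0$, equals $\mathbb{P}(X\le t)$ for every $t$. It is therefore the joint CDF of a Bernoulli vector.

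\textbf{Necessity.} Assume the expression $F$ of \Cref{prop:CDF} is the joint CDF of some Bernoulli vector $Y$.
\begin{itemize}
\item Recover atoms by inclusion--exclusion over the corners of the unit cube. For $A\subseteq[N]$,
\[
\mathbb{P}(Y=\mathbbm{1}_A)=\sum_{D\subseteq A}(-1)^{|D|}F(\mathbbm{1}_{A\setminus D}),
\]
where $F$ is evaluated at binary points $x$, for which $C(x)=[N]\setminus S(x)$.
\item Show that the same inclusion--exclusion applied to \eqref{eq:joint_cdf_pmf}, namely $F(x)=\sum_{A'\subseteq S(x)}p_{A'}$, returns $p_A$. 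This is again Möbius inversion on the Boolean lattice and uses the identity $\sum_{V\subseteq W}(-1)^{|V|}=\mathbbm{1}\{W=\varnothing\}$, exactly as in the proof of \Cref{corr:PMFvalid}.
\item Since $p_A=\mathbb{P}(Y=\mathbbm{1}_A)\ge 0$, nonnegativity follows.
\end{itemize}

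\textbf{Main obstacle.} The delicate step is making sure identity \eqref{eq:joint_cdf_pmf} holds as an algebraic identity in $\Gamma$, rather than only for realizable $\Gamma$, since necessity must not presuppose validity. I would verify it directly. Substitute the Möbius formula of \Cref{prop:PMF} into $\sum_{A\subseteq S}p_A$, swap the sums, and show the result equals $\sum_{B\subseteq[N]\setminus S}(-1)^{|B|}\gamma_B$. Alternatively, invert from the $\gamma$-side: $\sum_{B\subseteq C}(-1)^{|B|}\gamma_B=\sum_{B\subseteq C}(-1)^{|B|}\sum_{T\supseteq B}p_T$. This is valid because \Cref{corr:PMFvalid} shows the moment equations \eqref{eq:momEq} hold for the Möbius-defined $p$ regardless of sign. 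The inner sum $\sum_{B\subseteq C\cap T}(-1)^{|B|}$ is $\mathbbm{1}\{C\cap T=\varnothing\}$, which gives $\sum_{T\subseteq[N]\setminus C}p_T$ purely algebraically. With that in hand, both directions above go through.
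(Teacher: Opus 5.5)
Your proof is correct and has the same structure as the paper's: sufficiency via \Cref{corr:PMFvalid} plus \Cref{prop:CDF}, and necessity by showing that the atoms of the Bernoulli vector $Y$ realizing $F$ are the M\"obius-defined $p_A$. The paper makes that identification in one line by applying \Cref{prop:PMF} to $Y$, tacitly using that $F$ forces $Y$'s cross-moments to equal $\Gamma$; you instead invert $F$ at the corners of the cube and check \eqref{eq:joint_cdf_pmf} as an algebraic identity that holds whatever the signs of the $p_A$, which spells out the step the paper leaves implicit.
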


\begin{proof}
If $p_A\ge 0$ for all $A$, then \Cref{corr:PMFvalid} shows $(p_A)$ is a valid PMF on $\{0,1\}^N$.
For this PMF, $F(t)=\mathbb{P}(X\le t)$ is a valid joint CDF by definition, and \Cref{prop:CDF}
shows that it equals \eqref{eq:joint_cdf_gamma}.

Conversely, if \eqref{eq:joint_cdf_gamma} is the joint CDF of some Bernoulli vector, then its atoms
$p_A=\mathbb{P}(X_i=1\text{ iff }i\in A)$ are nonnegative. By \Cref{prop:PMF}, these atoms are exactly the
values computed by \eqref{eq:pmf_mobius}.
\end{proof}

\subsection{Computational complexity}
\label{subsec:complexity}

The formulas above are closed-form, but the number of subsets of $[N]$ is $2^N$, so any algorithm that
takes the full collection $\Gamma=\{\gamma_A:A\subseteq[N]\}$ as input necessarily works with exponentially
many quantities in $N$.

Let
\[
K := |\Gamma| = 2^N.
\]
Then $K$ is the natural input size for the fully specified cross-moment representation, and
\[
N = \log_2 K.
\]

A direct implementation of \eqref{eq:pmf_mobius} computes all atoms separately. Since the total number of
pairs $(A,B)$ with $A\subseteq B$ is $3^N$, this naive approach has time complexity
\[
O(3^N).
\]

However, \eqref{eq:momEq} is a zeta transform on the Boolean lattice and \eqref{eq:pmf_mobius} is the
corresponding M\"obius inversion. Using the standard fast zeta/M\"obius transform on subsets, one can compute
all values $(p_A)_{A\subseteq[N]}$ in $O(N\,2^N)$ time and $O(2^N)$ memory.
Equivalently, in terms of the input size $K=2^N$, this is $O(K\log K)$ time and $O(K)$ memory.
Thus the dependence is exponential in the number of Bernoulli variables $N$, but nearly linear in the size
of the full input table $\Gamma$.

In \Cref{ssec:reduceMom} we explore ways to reduce the effective complexity when the goal is not to recover
the unique PMF associated with a fully specified moment table, but rather to obtain \emph{some} valid PMF
consistent with a prescribed collection of low-order moments.

\subsection{Geometry of first- and second-moment feasibility}\label{subsec:moment_polytope}
The PMF-LP can also be viewed geometrically. This viewpoint yields a support bound for feasible targets and
a separating certificate for infeasible ones.

\begin{proposition}[Convex-hull characterization, sparse realization, and infeasibility certificate]\label{prop:moment_polytope}
For each state $x\in\{0,1\}^N$, define the feature vector
\[
\phi(x):=\bigl(1,(x_i)_{i=1}^N,(x_ix_j)_{1\le i<j\le N}\bigr)\in\mathbb{R}^m,
\qquad m=1+N+\binom{N}{2},
\]
and define the target vector
\[
b:=\bigl(1,(\mu_i)_{i=1}^N,(\Sigma_{ij}+\mu_i\mu_j)_{1\le i<j\le N}\bigr).
\]
Then the requested moments are jointly Bernoulli-feasible if and only if
\[
b\in\operatorname{conv}\{\phi(x):x\in\{0,1\}^N\}.
\]
Whenever they are feasible, there is a compatible distribution supported on at most $m$ states.
Whenever they are infeasible, there are coefficients $c$, $(a_i)$, and $(a_{ij})$ such that
\[
q(x)=c+\sum_i a_i x_i+\sum_{i<j}a_{ij}x_ix_j
\]
satisfies $q(x)\ge0$ for every $x\in\{0,1\}^N$, but
\[
c+\sum_i a_i\mu_i+\sum_{i<j}a_{ij}(\Sigma_{ij}+\mu_i\mu_j)<0.
\]
Thus $q$ is a separating certificate: every Bernoulli law has $\mathbb{E}[q(X)]\ge0$, whereas the requested moments would force $\mathbb{E}[q(X)]<0$.
\end{proposition}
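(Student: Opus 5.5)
The argument has three parts: the convex-hull equivalence, the support bound via Carathéodory's theorem, and the separating certificate via a separating hyperplane (Farkas-type) argument for a polytope. Each follows directly from the linear constraints of the PMF-LP in \Cref{eqn:linear_prog}.

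\emph{Equivalence.} For Bernoulli variables $\operatorname{Var}(X_i)=\mu_i(1-\mu_i)$ holds automatically. Matching the means and pairwise Pearson correlations is therefore equivalent to matching $\mathbb{E}[X_i]=\mu_i$ and $\mathbb{E}[X_iX_j]=\Sigma_{ij}+\mu_i\mu_j$. A law on $\{0,1\}^N$ is a vector $\vec{\alpha}\ge 0$ with $\sum_x\alpha_x=1$. By the computation in \Cref{sec:bernoulli_pmf}, where $\mathbb{E}[X_iX_j]=\vec{\kappa}_{ij}^\mathsf{T}\vec{\alpha}$, its first- and second-moment vector is $\sum_x \alpha_x\phi(x)$. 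The first coordinate of $\phi(x)$ is $1$, so the normalization constraint corresponds to the first coordinate of $b$. Hence feasibility of the LP is literally the statement that $b=\sum_x\alpha_x\phi(x)$ for some convex weights, i.e.\ $b\in\operatorname{conv}\{\phi(x)\}$. Both directions follow by reading off this identity.

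\emph{Sparse support.} Every $\phi(x)$ lies in the affine hyperplane $\{y\in\mathbb{R}^m: y_1=1\}$, which has dimension $m-1$. Carathéodory's theorem in dimension $m-1$ therefore expresses $b$ as a convex combination of at most $(m-1)+1=m$ points $\phi(x)$, which gives a law supported on at most $m$ states. Alternatively, the LP has $m$ equality constraints, so any basic feasible solution has at most $m$ nonzero entries. I would mention both arguments briefly, since the basic-solution version links to the sparse refinement.

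\emph{Certificate.} Suppose $b\notin P:=\operatorname{conv}\{\phi(x)\}$. Since $P$ is the convex hull of finitely many points, it is compact and convex, so strict separation gives a vector $w\in\mathbb{R}^m$ and a scalar $\beta$ with $w^\mathsf{T}\phi(x)\ge\beta>w^\mathsf{T}b$ for all $x$. I would then absorb $\beta$ into the constant coordinate: set $c:=w_1-\beta$, $a_i:=w_{1+i}$, and $a_{ij}$ equal to the remaining components of $w$. With $q(x):=w^\mathsf{T}\phi(x)-\beta$, this gives $q(x)\ge0$ on $\{0,1\}^N$ and $c+\sum_i a_i\mu_i+\sum_{i<j}a_{ij}(\Sigma_{ij}+\mu_i\mu_j)=w^\mathsf{T}b-\beta<0$, because $b_1=\phi(x)_1=1$. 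The interpretation follows from linearity: for any Bernoulli law, $\mathbb{E}[q(X)]=\sum_x\alpha_xq(x)\ge0$, whereas substituting the requested moments into $\mathbb{E}[q(X)]$ gives exactly the negative quantity above.

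The main point needing care is this shift of $\beta$ into $c$, which relies on the constant coordinate being identically $1$. The same certificate could also be obtained via LP duality (Farkas' lemma) applied to $\{\vec{\alpha}\ge0:\Phi^\mathsf{T}\vec{\alpha}=b\}$; that route yields a $y$ with $\Phi y\ge0$ and $b^\mathsf{T}y<0$ directly, with no shift needed. Both routes are routine, and I would present the Farkas version, as it matches the LP framing.
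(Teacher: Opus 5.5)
Your proposal is correct and matches the paper's proof: LP feasibility read as $b\in\operatorname{conv}\{\phi(x)\}$, Carath\'eodory in the affine hyperplane $\{y_1=1\}$ for the $m$-state support bound, and strict separation with the intercept absorbed into the constant coordinate to get $c,a_i,a_{ij}$. The Farkas-lemma route you prefer is an equivalent form of the same separation argument that avoids the shift. Your basic-feasible-solution remark is also a valid alternative proof of the support bound.
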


\begin{proof}
A PMF $(\alpha_x)$ satisfies the constraints exactly when
\[
b=\sum_{x\in\{0,1\}^N}\alpha_x\phi(x),
\qquad \alpha_x\ge0,\qquad \sum_x\alpha_x=1,
\]
which is precisely membership in the displayed convex hull. All feature vectors have first coordinate one,
so this convex hull lies in an affine space of dimension at most $m-1$. Carath\'eodory's theorem therefore
represents every feasible $b$ as a convex combination of at most $m$ feature vectors, proving the support
bound.

If $b$ is outside the finite convex hull, the strict separating-hyperplane theorem gives an affine functional
that is nonnegative on every $\phi(x)$ and negative at $b$. Absorbing its intercept into the constant
coordinate produces coefficients $c,a_i,a_{ij}$ with the stated properties. Conversely, such coefficients
rule out feasibility because taking expectations of the pointwise inequality $q(X)\ge0$ contradicts the
value forced by the target moments.
\end{proof}

\subsection{Reducing the number of moments}\label{ssec:reduceMom}

The results above show that the full cross-moment table $\Gamma=\{\gamma_A:A\subseteq[N]\}$ uniquely
determines the joint law, and that validity of $\Gamma$ is equivalent to nonnegativity of the implied atomic
probabilities $(p_A)_{A\subseteq[N]}$ (\Cref{corr:PMFvalid}). While the full specification is exact, it is
often unnecessary in practice: one typically needs \emph{some} valid joint PMF that matches a limited
collection of low-order constraints (e.g.\ means and pairwise correlations), not the unique PMF corresponding
to a fully prescribed $\Gamma$.

The key observation is that, by \Cref{prop:PMF}, each atom $p_A$ is an explicit \emph{linear} function of the
moments. Consequently, imposing $p_A\ge 0$ for all $A$ yields a family of linear inequalities in the unknown
moments. This makes it natural to search for a \emph{low-dimensional} or \emph{sparse} moment table
$\Gamma$ that is consistent with the user-specified low-order moments and satisfies the nonnegativity
constraints.

\paragraph{A truncation principle.}
A particularly simple way to reduce the number of moments is to enforce that all sufficiently high-order
cross-moments vanish. Intuitively, this forbids large sets of simultaneous successes and forces the PMF to
live on a small part of $\{0,1\}^N$.

\begin{lemma}[Moment truncation and support truncation]\label{lem:moment_truncation_support}
Fix an integer $k\in\{0,1,\dots,N\}$. Suppose a moment table $(\gamma_A)_{A\subseteq[N]}$ satisfies
\[
\gamma_A = 0 \qquad \text{for all } A\subseteq[N]\text{ with }|A|>k,
\qquad\text{and}\qquad \gamma_\varnothing=1.
\]
Define the atomic probabilities $(p_A)_{A\subseteq[N]}$ using the equivalent subset form in
\Cref{prop:PMF}. Then
\[
p_A = 0 \qquad \text{for all }A\subseteq[N]\text{ with }|A|>k,
\]
and for all $A\subseteq[N]$ with $|A|\le k$,
\begin{equation}\label{eq:truncated_mobius}
p_A
\;=\;
\sum_{\substack{B\supseteq A\\ |B|\le k}}
(-1)^{|B|-|A|}\,\gamma_B.
\end{equation}
In particular, if the values computed in \eqref{eq:truncated_mobius} satisfy $p_A\ge 0$ for all $A$, then
$(p_A)$ is a valid joint PMF supported on $\{x\in\{0,1\}^N:\sum_{i=1}^N x_i\le k\}$.
Conversely, for any Bernoulli vector, support on this set implies $\gamma_A=0$ for every $|A|>k$.
\end{lemma}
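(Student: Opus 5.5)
The argument rests on the subset form of \Cref{prop:PMF}, namely $p_A=\sum_{B\supseteq A}(-1)^{|B|-|A|}\gamma_B$, with the truncation hypothesis $\gamma_B=0$ for $|B|>k$ substituted into it. The three claims are then handled in turn.

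\emph{Vanishing of large atoms.} If $|A|>k$, every $B\supseteq A$ satisfies $|B|\ge|A|>k$, so each $\gamma_B$ in the sum is zero and $p_A=0$.

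\emph{The truncated formula.} If $|A|\le k$, the terms with $|B|>k$ vanish. Dropping them leaves exactly \eqref{eq:truncated_mobius}.

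\emph{Validity under nonnegativity.} Assume every value in \eqref{eq:truncated_mobius} is nonnegative. Then all $p_A\ge 0$: the small atoms by hypothesis, the large atoms because they are zero. Since the full table satisfies $\gamma_\varnothing=1$, \Cref{corr:PMFvalid} applies and $(p_A)$ is a valid PMF whose moments are the given $\gamma_A$. Its support lies in $\{A:|A|\le k\}$, which is the set $\{x:\sum_i x_i\le k\}$. One subtlety needs care here: \Cref{corr:PMFvalid} requires nonnegativity of \emph{all} $2^N$ atoms computed from the full table. The first step already provides this, since the atoms with $|A|>k$ are zero rather than merely unconstrained. So the corollary applies directly, with no separate normalization check.

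\emph{Converse.} Suppose a Bernoulli vector $X$ is supported on $\{\sum_i x_i\le k\}$, and take $|A|>k$. The product $\prod_{i\in A}X_i$ equals $1$ only if at least $|A|>k$ coordinates equal one, which is an event of probability zero. Equivalently, by \Cref{lemma:momCond}, $\gamma_A=\sum_{T\supseteq A}p_T$, and every such $T$ has $|T|>k$, so $p_T=0$. Hence $\gamma_A=0$.

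No real obstacle is expected. The only step needing attention is the one noted above: recognizing that the nonnegativity hypothesis, which is stated only for $|A|\le k$, extends to all atoms. Once that is checked, \Cref{corr:PMFvalid} can be invoked without further work.
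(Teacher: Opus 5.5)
Your proposal is correct and follows essentially the same route as the paper: large atoms vanish because all their supersets have vanishing moments, the truncated formula comes from dropping the zero terms, validity follows from \Cref{corr:PMFvalid}, and the converse holds because $\prod_{i\in A}X_i=0$ almost surely. Your explicit remark that the atoms with $|A|>k$ are identically zero, so the corollary's nonnegativity hypothesis covers all $2^N$ atoms, is a small clarification that the paper leaves implicit.
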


\begin{proof}
If $|A|>k$ and $B\supseteq A$, then $|B|\ge |A|>k$, so $\gamma_B=0$ by assumption, and hence $p_A=0$.
If $|A|\le k$, then only supersets $B\supseteq A$ with $|B|\le k$ can contribute nonzero terms, yielding
\eqref{eq:truncated_mobius}. The validity claim follows from \Cref{corr:PMFvalid}: nonnegativity of all atoms
is equivalent to existence of a Bernoulli vector with the prescribed moments, and the previous argument
shows the resulting PMF has no mass on atoms with more than $k$ ones.

For the converse, suppose the law is supported on states with at most $k$ ones. If $|A|>k$, then on every
supported state at least one coordinate in $A$ is zero. Hence $\prod_{i\in A}X_i=0$ almost surely and
$\gamma_A=0$.
\end{proof}

\paragraph{Complexity under truncation.}
If $\gamma_A=0$ for $|A|>k$, then the number of potentially nonzero moments is
\[
M_k := \sum_{m=0}^{k}\binom{N}{m},
\]
which is polynomial in $N$ for fixed $k$. Moreover, for each fixed $A$ with $|A|\le k$, the sum
\eqref{eq:truncated_mobius} ranges over supersets of $A$ of size at most $k$, whose number is at most
\[
\sum_{r=0}^{k-|A|}\binom{N-|A|}{r}\le M_k.
\]
Thus, when $k$ is fixed and small, evaluating the nonzero atoms and checking $p_A\ge 0$ can be done in
time and memory scaling polynomially in $N$.

\paragraph{The order-$2$ (pairwise-only) truncation as a ``quick'' construction.}
The most aggressive nontrivial truncation is $k=2$, i.e.\ we set all moments of order $\ge 3$ to zero.
Then the PMF assigns mass only to outcomes with at most two ones. In this case the inversion simplifies to
closed forms that depend only on the singleton and pair moments.

\begin{corollary}[Explicit PMF under pairwise-only moments]\label{corr:pairwise_only_pmf}
Assume $\gamma_\varnothing=1$ and $\gamma_A=0$ for all $|A|\ge 3$. Then the implied PMF $(p_A)$ is given by
\[
p_{\{i,j\}} = \gamma_{\{i,j\}}
\quad (i\ne j),\qquad
p_{\{i\}} = \gamma_{\{i\}} - \sum_{j\in[N]\setminus\{i\}} \gamma_{\{i,j\}},
\qquad
p_{\varnothing} = 1 - \sum_{i\in[N]}\gamma_{\{i\}} + \sum_{1\le i<j\le N}\gamma_{\{i,j\}},
\]
and $p_A=0$ for all $|A|\ge 3$. In particular, \emph{within this order-$2$ truncation family}, the above
formulas define a valid joint distribution if and only if
\[
\gamma_{\{i,j\}}\ge 0 \ \ (\forall i\ne j),\qquad
\gamma_{\{i\}} - \sum_{j\ne i}\gamma_{\{i,j\}} \ge 0 \ \ (\forall i),\qquad
1 - \sum_i\gamma_{\{i\}} + \sum_{i<j}\gamma_{\{i,j\}} \ge 0.
\]
\end{corollary}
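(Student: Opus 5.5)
This corollary is a direct specialization of \Cref{lem:moment_truncation_support} with $k=2$, followed by an appeal to \Cref{corr:PMFvalid}. The plan is to compute the truncated M\"obius sum \eqref{eq:truncated_mobius} for each of the three possible cardinalities $|A|\in\{0,1,2\}$. For $|A|\ge 3$, the lemma already gives $p_A=0$.

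\textbf{Case $|A|=2$, say $A=\{i,j\}$.} The only superset $B\supseteq A$ with $|B|\le 2$ is $A$ itself, so $p_{\{i,j\}}=\gamma_{\{i,j\}}$.

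\textbf{Case $|A|=1$, say $A=\{i\}$.} The admissible supersets are $\{i\}$ (sign $+$) and $\{i,j\}$ for $j\neq i$ (sign $-$). This gives
\[
p_{\{i\}}=\gamma_{\{i\}}-\sum_{j\ne i}\gamma_{\{i,j\}}.
\]

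\textbf{Case $A=\varnothing$.} The admissible supersets are $\varnothing$, all singletons, and all pairs, with signs $+,-,+$ respectively. Using $\gamma_\varnothing=1$, this gives
\[
p_\varnothing=1-\sum_i\gamma_{\{i\}}+\sum_{i<j}\gamma_{\{i,j\}}.
\]
In the last sum each unordered pair is counted once, because $B$ ranges over sets rather than ordered pairs.

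For the validity claim, \Cref{corr:PMFvalid} says that the full table $\Gamma$, which has $\gamma_A=0$ for $|A|\ge3$, is realizable exactly when every implied atom is nonnegative. The atoms with $|A|\ge3$ are identically zero, so nonnegativity reduces to the three displayed families of inequalities. Normalization holds automatically, since the atoms sum to $\gamma_\varnothing=1$; one can also verify this directly by adding the three formulas and observing that each $\gamma_{\{i\}}$ and each $\gamma_{\{i,j\}}$ cancels.

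There is no real obstacle here; the proof is mainly bookkeeping. The one place requiring care is the sign and multiplicity of the pair terms in $p_\varnothing$ versus $p_{\{i\}}$. Each pair $\{i,j\}$ appears once in $p_\varnothing$. It also appears in both $p_{\{i\}}$ and $p_{\{j\}}$, and I would confirm that this is consistent with the normalization check above.
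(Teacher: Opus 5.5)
Your proposal is correct and follows the paper's proof: the same case split over $|A|\in\{0,1,2\}$ in the truncated M\"obius formula, with validity reduced to nonnegativity of the atoms that can be nonzero under order-$2$ truncation. The paper leaves the normalization check implicit, and your explicit version is right: each $\gamma_{\{i,j\}}$ appears with net coefficient $1-2+1=0$, and each $\gamma_{\{i\}}$ with net coefficient $1-1=0$.
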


\begin{proof}
For $A=\{i,j\}$, the only superset $B\supseteq A$ with $|B|\le 2$ is $B=A$, so $p_{\{i,j\}}=\gamma_{\{i,j\}}$.
For $A=\{i\}$, the contributing supersets with $|B|\le 2$ are $B=\{i\}$ and $B=\{i,j\}$ with $j\ne i$,
giving $p_{\{i\}}=\gamma_{\{i\}}-\sum_{j\ne i}\gamma_{\{i,j\}}$.
For $A=\varnothing$, the contributing sets are $B=\varnothing$, all singletons, and all pairs, yielding the
displayed formula for $p_\varnothing$. The nonnegativity conditions are exactly $p_A\ge 0$ for all atoms
that can be nonzero under the order-$2$ truncation.
\end{proof}

\paragraph{Matching means and correlations.}
If one is given target means $\mu_i=\mathbb{E}[X_i]$ and Pearson correlations $\rho_{ij}=\operatorname{corr}(X_i,X_j)$,
then the corresponding pair moments are
\[
\gamma_{\{i\}} = \mu_i,
\qquad
\gamma_{\{i,j\}}
=
\mathbb{E}[X_iX_j]
=
\mu_i\mu_j + \rho_{ij}\sqrt{\mu_i(1-\mu_i)\,\mu_j(1-\mu_j)}.
\]
Under the pairwise-only truncation, one computes $(p_A)$ via \Cref{corr:pairwise_only_pmf} and obtains a
valid joint PMF exactly when the resulting atoms are nonnegative.

\paragraph{Important interpretation.}
The order-$2$ truncation is a \emph{restricted model class}: it forces the support to lie on outcomes with at
most two ones. Therefore, failure of the inequalities in \Cref{corr:pairwise_only_pmf} does \emph{not}
imply that the specified means and pairwise correlations are globally infeasible for Bernoulli variables. It
only shows that they cannot be realized \emph{within this truncated-support family}. If these inequalities
fail, a natural remedy is to increase $k$ and allow some higher-order moments to be nonzero, while still
keeping $k$ small enough that $M_k$ remains tractable.

\paragraph{Feasibility viewpoint.}
More generally, one may fix a family of ``allowed'' moments (for instance, all $A$ with $|A|\le k$) and
treat the remaining moments as zero. By \Cref{corr:PMFvalid}, the problem of finding \emph{any} valid
distribution consistent with prescribed low-order information reduces to finding a moment table for which
the implied atoms are nonnegative. Under truncation (or other sparsity patterns), this becomes a linear
feasibility problem over a number of variables scaling like $M_k$ rather than $2^N$.

\paragraph{A concrete LP template.}
Suppose the user specifies means $\mu_i=\mathbb{E}[X_i]$ and pairwise correlations $\rho_{ij}$, and fix a
truncation order $k\ge 2$. Then
\[
\gamma_\varnothing=1,\qquad
\gamma_{\{i\}}=\mu_i,\qquad
\gamma_{\{i,j\}}=\mu_i\mu_j+\rho_{ij}\sqrt{\mu_i(1-\mu_i)\,\mu_j(1-\mu_j)}
\]
are fixed, the moments $\gamma_A$ with $3\le |A|\le k$ are decision variables, and the moments with $|A|>k$
are set to zero. The resulting order-$k$ completion problem is the LP
\[
\text{find } (\gamma_A)_{A\subseteq[N],\,3\le |A|\le k}
\]
subject to
\[
\sum_{\substack{B\supseteq A\\ |B|\le k}}(-1)^{|B|-|A|}\gamma_B \;\ge\; 0
\qquad \text{for every } A\subseteq[N]\text{ with }|A|\le k,
\]
where the singleton and pair moments are held fixed at the target values above. A zero objective gives a
pure feasibility problem. In practice, one may add a simple tie-breaking linear objective, such as minimizing
$\sum_{3\le |A|\le k}\gamma_A$ to favor sparse higher-order structure or maximizing $p_\varnothing$ to place
more mass on the all-zero state. If the LP is infeasible, then either the target lies outside the order-$k$
truncated family or $k$ is too small to capture the required dependence.

\subsection{Generating correlated Bernoulli random variables}

Once a valid joint law has been specified, either through the full table $\Gamma$ or through a reduced choice as in \Cref{ssec:reduceMom}, the remaining task is to generate samples $(X_1,\dots,X_N)$.

\paragraph{Direct sampling from the PMF.}
If one explicitly computes and stores all atoms $(p_A)_{A\subseteq[N]}$, then one can sample an atom
$A$ with probability $p_A$ and output $x=\mathbf{1}_A$. This is straightforward but requires $O(2^N)$
storage in general. Under a support truncation $|A|\le k$, the number of nonzero atoms is at most
$M_k=\sum_{m=0}^k\binom{N}{m}$, which can be feasible for small $k$.

\paragraph{Cylinder (partial-assignment) probabilities.}
For disjoint index sets $O,Z\subseteq[N]$, define the cylinder probability
\[
w(O,Z)\;:=\;\mathbb{P}\!\big(X_i=1\ \forall i\in O,\ X_j=0\ \forall j\in Z\big).
\]
This quantity can be computed (i) from an explicitly stored PMF by summing all atoms whose set of ones
$A$ satisfies $O\subseteq A\subseteq [N]\setminus Z$, (ii) from a table of CDF values on $\{0,1\}^N$ by finite
differences, or (iii) directly from cross-moments via inclusion--exclusion as below.

\paragraph{On-the-fly sampling via sequential conditioning.}
To avoid precomputing the full PMF, one may sample coordinates sequentially using conditional probabilities computed from cross-moments.
The following identity extends the expansion used in the proof of \Cref{prop:PMF} and specializes to \Cref{prop:CDF} when no ones are fixed.

\begin{lemma}[Probabilities of partial assignments from cross-moments]\label{lem:partial_assignment_prob}
Let $O,Z\subseteq[N]$ be disjoint index sets, interpreted as constraints
\[
X_i=1 \text{ for } i\in O,
\qquad
X_j=0 \text{ for } j\in Z,
\]
with no restriction on coordinates in $[N]\setminus (O\cup Z)$.
Then
\begin{equation}\label{eq:partial_event_prob}
w(O,Z)
\;=\;
\mathbb{P}\!\big(X_i=1\ \forall i\in O,\ X_j=0\ \forall j\in Z\big)
\;=\;
\mathbb{E}\!\left[\prod_{i\in O}X_i\prod_{j\in Z}(1-X_j)\right]
\;=\;
\sum_{B\subseteq Z}(-1)^{|B|}\,\gamma_{O\cup B}.
\end{equation}
\end{lemma}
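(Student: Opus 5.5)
The plan is to mirror the proof of \Cref{prop:PMF}. First I write the event as a product of indicators and then expand. Because each $X_i\in\{0,1\}$, the indicator of $\{X_i=1\ \forall i\in O,\ X_j=0\ \forall j\in Z\}$ is exactly
\[
\prod_{i\in O}X_i\prod_{j\in Z}(1-X_j).
\]
Taking expectations gives the second equality in \eqref{eq:partial_event_prob} directly. Coordinates outside $O\cup Z$ are unconstrained, so they contribute no factor.

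For the third equality, I expand the product over $Z$ by the binomial expansion already used in \Cref{prop:PMF}:
\[
\prod_{j\in Z}(1-X_j)=\sum_{B\subseteq Z}(-1)^{|B|}\prod_{j\in B}X_j.
\]
Multiplying by $\prod_{i\in O}X_i$, each term becomes $\prod_{l\in O\cup B}X_l$. Here I use that $O$ and $Z$ are disjoint, so $O\cup B$ is a disjoint union and no variable is repeated. Even if one were repeated, idempotence $X_i^2=X_i$ would make the identity still hold. Linearity of expectation then turns each term into $\gamma_{O\cup B}$, which yields the stated sum.

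There is no real obstacle here; the only point to state carefully is the indicator identity, which rests on $X_i\in\{0,1\}$. As a sanity check, I will note two special cases. Taking $O=\varnothing$ recovers \Cref{prop:CDF} with $Z=C(t)$. Taking $O\cup Z=[N]$ recovers \eqref{eq:pmf_mobius}.

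I will also remark that under truncation ($\gamma_A=0$ for $|A|>k$), the terms with $|O|+|B|>k$ vanish. This gives the restricted sum used in \eqref{eq:main_truncated_conditional}.
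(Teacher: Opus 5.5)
Your proof is correct and follows the paper's argument essentially verbatim. Both arguments use $X_i\in\{0,1\}$ to identify $\prod_{i\in O}X_i\prod_{j\in Z}(1-X_j)$ as the indicator of the partial-assignment event, expand the product over $Z$, and apply linearity of expectation to obtain $\gamma_{O\cup B}$; your special-case checks ($O=\varnothing$ recovering \Cref{prop:CDF}, and $O\cup Z=[N]$ recovering \eqref{eq:pmf_mobius}) are consistent with the paper.
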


\begin{proof}
Since $X_i\in\{0,1\}$, the product $\prod_{i\in O}X_i\prod_{j\in Z}(1-X_j)$ is the indicator of the event
$\{X_i=1\ \forall i\in O,\ X_j=0\ \forall j\in Z\}$, hence the first equality.
Expanding $\prod_{j\in Z}(1-X_j)$ and taking expectations gives the sum over $B\subseteq Z$ with
$\mathbb{E}[\prod_{i\in O\cup B}X_i]=\gamma_{O\cup B}$.
\end{proof}

Fix an ordering $\pi(1),\dots,\pi(N)$ of the indices.
Suppose we have already sampled $(X_{\pi(1)},\dots,X_{\pi(m)})$ and define the sets of realized ones and zeros
\[
O_m := \{\pi(\ell): 1\le \ell\le m,\ X_{\pi(\ell)}=1\},
\qquad
Z_m := \{\pi(\ell): 1\le \ell\le m,\ X_{\pi(\ell)}=0\}.
\]
Then the conditional probability that the next variable equals $1$ can be written compactly as the ratio
\begin{equation}\label{eq:sequential_conditional_ratio}
\mathbb{P}\!\big(X_{\pi(m+1)}=1 \,\big|\, X_i=1\ \forall i\in O_m,\ X_j=0\ \forall j\in Z_m\big)
=
\frac{w(O_m\cup\{\pi(m+1)\},Z_m)}{w(O_m,Z_m)},
\end{equation}
whenever $w(O_m,Z_m)>0$. Using \Cref{lem:partial_assignment_prob} to expand numerator and denominator
yields the explicit cross-moment form
\begin{equation}\label{eq:sequential_conditional}
\mathbb{P}\!\big(X_{\pi(m+1)}=1 \,\big|\, X_i=1\ \forall i\in O_m,\ X_j=0\ \forall j\in Z_m\big)
=
\frac{
\sum_{B\subseteq Z_m}(-1)^{|B|}\,\gamma_{O_m\cup\{\pi(m+1)\}\cup B}
}{
\sum_{B\subseteq Z_m}(-1)^{|B|}\,\gamma_{O_m\cup B}
}.
\end{equation}
Sampling proceeds by drawing $U_{m+1}\sim \operatorname{Unif}(0,1)$ and setting $X_{\pi(m+1)}=1$ iff
$U_{m+1}$ is less than the right-hand side of \eqref{eq:sequential_conditional}. Repeating for
$m=0,1,\dots,N-1$ yields a full draw $(X_1,\dots,X_N)$ with the desired joint law.

\paragraph{Implementation details.}
To implement \eqref{eq:sequential_conditional}, one first fixes an ordering $\pi$ and stores the fitted truncated moment table $\Gamma_k$.
At each step, the state of the recursion is fully described by the pair $(O_m,Z_m)$, so it is natural to memoize the cylinder masses $w(O,Z)$ and reuse them whenever the same partial assignment reappears.
Under order-$k$ truncation, only terms with $|O|+|B|\le k$ can contribute, so each evaluation of $w(O,Z)$ reduces to a sum over a subset family of size at most $M_k$.
In floating-point implementations, tiny negative values caused by solver tolerances should be clipped to zero before forming the ratio in \eqref{eq:sequential_conditional}.
If the denominator is numerically zero, one can either declare the prefix impossible and backtrack or, in a simpler implementation, fall back to a direct PMF sampler if the fitted atoms have already been materialized.
In this sense, sequential conditioning is most useful when one wants to work from the reduced moment table $\Gamma_k$ without explicitly storing all nonzero atoms.

\paragraph{Computational cost and the role of sparsity.}
Formula \eqref{eq:sequential_conditional} requires evaluating sums over $B\subseteq Z_m$, which is
exponential in $m$ in the worst case. Interpreted via \eqref{eq:sequential_conditional_ratio}, this means
that exact sequential sampling requires an exact oracle for $w(O,Z)$, which in general is as hard as
recovering the full joint distribution (and thus still requires, explicitly or implicitly, accounting for all
$2^N$ states).

However, under the truncation scheme of \Cref{lem:moment_truncation_support} with $\gamma_A=0$ for $|A|>k$,
only terms with $|O_m|+|B|\le k$ can contribute. Hence, for fixed $k$, each of the sums in
\eqref{eq:sequential_conditional} contains at most
\[
\sum_{r=0}^{k-|O_m|} \binom{|Z_m|}{r}
\;\le\;
\sum_{r=0}^{k} \binom{N}{r}
\;=\;
M_k
\]
terms, which is polynomial in $N$ for fixed $k$. In this regime, one can generate a full vector
$(X_1,\dots,X_N)$ in time on the order of $N\,M_k$ without storing the full $2^N$-atom PMF.

\Cref{alg:truncated_sequential_sampler} summarizes one way to operationalize the truncation and
sequential-conditioning ideas above.

\begin{figure}[t]
\centering
\refstepcounter{algorithm}\label{alg:truncated_sequential_sampler}
\fbox{%
\begin{minipage}{0.95\linewidth}
\small
\textbf{Algorithm~\thealgorithm. Truncated Moment Completion with Sequential Conditioning}

\textbf{Input:} target means $(\mu_i)_{i=1}^N$, target pairwise correlations $(\rho_{ij})_{1\le i<j\le N}$,
truncation order $k$, number of samples $L$, and a variable order $\pi(1),\dots,\pi(N)$.

\textbf{Offline completion step}
\begin{enumerate}[itemsep=0.25ex, topsep=0.4ex, leftmargin=2.4em]
\item Set $\gamma_\varnothing=1$, $\gamma_{\{i\}}=\mu_i$, and
\[
\gamma_{\{i,j\}}=\mu_i\mu_j+\rho_{ij}\sqrt{\mu_i(1-\mu_i)\,\mu_j(1-\mu_j)}
\qquad (1\le i<j\le N).
\]
\item Introduce decision variables $\gamma_A$ for all $A\subseteq[N]$ with $3\le |A|\le k$, and set
$\gamma_A=0$ for all $|A|>k$.
\item Solve the linear feasibility problem
\[
p_A=\sum_{\substack{B\supseteq A\\ |B|\le k}}(-1)^{|B|-|A|}\gamma_B \;\ge\; 0
\qquad \text{for all } A\subseteq[N]\text{ with }|A|\le k,
\]
together with any chosen bounds on the unknown higher-order moments.
\item If the LP is infeasible, either stop or increase $k$. Otherwise store the fitted truncated moment table
$\Gamma_k=\{\gamma_A: |A|\le k\}$.
\end{enumerate}

\textbf{Online sampling step}
\begin{enumerate}[itemsep=0.25ex, topsep=0.4ex, leftmargin=2.4em]
\item For each sample $\ell=1,\dots,L$, initialize $O_0=Z_0=\varnothing$.
\item For $m=0,1,\dots,N-1$, compute
\[
w(O_m,Z_m)=\sum_{\substack{B\subseteq Z_m\\ |O_m|+|B|\le k}}(-1)^{|B|}\gamma_{O_m\cup B}
\]
and
\[
q_m=
\frac{
\sum_{\substack{B\subseteq Z_m\\ |O_m|+1+|B|\le k}}(-1)^{|B|}\gamma_{O_m\cup\{\pi(m+1)\}\cup B}
}{
\sum_{\substack{B\subseteq Z_m\\ |O_m|+|B|\le k}}(-1)^{|B|}\gamma_{O_m\cup B}
}.
\]
\item Draw $U_m\sim \operatorname{Unif}(0,1)$ and set $X_{\pi(m+1)}=1$ if $U_m\le q_m$, otherwise set
$X_{\pi(m+1)}=0$.
\item Update
\[
O_{m+1}=O_m\cup\{\pi(m+1):X_{\pi(m+1)}=1\},\qquad
Z_{m+1}=Z_m\cup\{\pi(m+1):X_{\pi(m+1)}=0\},
\]
and continue until all coordinates are assigned.
\end{enumerate}

\textbf{Output:} draws $X^{(1)},\dots,X^{(L)}\in\{0,1\}^N$ from the fitted order-$k$ truncated model.
\end{minipage}%
}
\end{figure}

\paragraph{Connection with the joint CDF.}
The special case $O=\varnothing$ in \eqref{eq:partial_event_prob} gives
\[
\mathbb{P}(X_j=0\ \forall j\in Z)=\sum_{B\subseteq Z}(-1)^{|B|}\gamma_B,
\]
which is exactly the joint CDF formula \eqref{eq:joint_cdf_gamma} evaluated at any $t$ with $C(t)=Z$.
Thus, the same inclusion--exclusion structure that yields the closed-form CDF also provides the event
probabilities required for sequential (tree-based) inverse-transform sampling.

\paragraph{Extending beyond pairwise truncation.}
\begin{proposition}[Limits of order-2 truncation]\label{prop:order3_limitation}
There exist first- and second-order moments (means and pairwise correlations) for which no choice of
third-order moments yields a valid joint Bernoulli distribution. In other words, an infeasible pairwise
truncation cannot always be rescued by adding $\gamma_{i,j,k}$ variables. In fact, even consistent pairwise
marginals may have no joint extension \cite{Vorobev1962,PadmanabhanNatarajan2021}, so there is no guarantee
that specifying order-3 moments will make $p_A\ge 0$ for all atoms.
\end{proposition}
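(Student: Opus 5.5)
The plan is to prove the statement with an explicit counterexample in dimension $N=3$. For $N=3$, the third-order moment $\gamma_{\{1,2,3\}}$ is the only moment beyond the first and second orders. So ``no choice of third-order moments works'' is the same as ``no joint Bernoulli law exists'', and I can check it directly with the Möbius formula of \Cref{prop:PMF}. I would take
\[
\mu_1=\mu_2=\mu_3=\tfrac12,\qquad \rho_{12}=\rho_{13}=\rho_{23}=-1 .
\]

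The first step is to confirm that the pairwise data are admissible, so the counterexample is not trivial. With $p_i=p_j=\tfrac12$, the lower bound in \Cref{eqn:bernoulli_correlation_feasibility_bounds} is $(\max\{0,0\}-\tfrac14)/\tfrac14=-1$. Hence each pair is feasible, and each bivariate law exists: the one with $X_j=1-X_i$. The implied pair moments are
\[
\gamma_{\{i,j\}}=\tfrac14+(-1)\cdot\tfrac14=0 .
\]
Every bivariate marginal is therefore a consistent Bernoulli law. This is the Vorob'ev-type situation cited in the statement: consistent pairwise marginals with no joint extension.

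The second step treats $t:=\gamma_{\{1,2,3\}}$ as a free variable and computes two atoms from \Cref{prop:PMF}:
\[
p_{\{1,2,3\}}=t,\qquad
p_\varnothing=1-\sum_i\gamma_{\{i\}}+\sum_{i<j}\gamma_{\{i,j\}}-t=1-\tfrac32+0-t=-\tfrac12-t .
\]
Nonnegativity of both atoms would require $t\ge0$ and $t\le-\tfrac12$ at the same time, which is impossible. By \Cref{corr:PMFvalid}, no value of $t$ gives a valid distribution. In particular the order-2 truncation ($t=0$) fails, since it gives $p_\varnothing=-\tfrac12$, and adding the variable $\gamma_{\{1,2,3\}}$ cannot rescue it.

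As a cross-check, and to tie the example to \Cref{prop:moment_polytope}, I would also exhibit the separating certificate
\[
q(x)=1-x_1-x_2-x_3+x_1x_2+x_1x_3+x_2x_3 .
\]
On $\{0,1\}^3$ this equals $1$ at the all-zeros state, $0$ when exactly one or two coordinates are $1$, and $1$ at the all-ones state, so $q\ge0$ pointwise. Its expectation under the target moments is $1-\tfrac32+0=-\tfrac12<0$, which is again infeasibility. The intuition is that $X_1=1-X_2$ and $X_2=1-X_3$ force $X_1=X_3$, contradicting $\rho_{13}=-1$.

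There is no real obstacle; the only care needed is choosing parameters that are exactly on the pairwise boundary yet globally inconsistent. If a non-degenerate example is preferred, one can instead use $\rho_{ij}=-0.6$ with $\mu_i=\tfrac12$. Then $\gamma_{\{i,j\}}=0.1$, so $p_\varnothing=1-1.5+0.3-t=-0.2-t$, and the same two-atom contradiction goes through.
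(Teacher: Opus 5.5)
Your proposal is correct and is essentially the paper's proof: the same $N=3$ example with $\mu_i=\frac{1}{2}$ and $\gamma_{\{i,j\}}=0$ (i.e.\ $\rho_{ij}=-1$), with $t=\gamma_{\{1,2,3\}}$ left free in the M\"obius inversion. The only difference is cosmetic. You pair $p_{\{1,2,3\}}=t\ge 0$ directly with $p_\varnothing=-\frac{1}{2}-t\ge 0$, whereas the paper first uses $p_{\{i,j\}}=-t\ge 0$ to force $t=0$ and then obtains $p_\varnothing=-\frac{1}{2}$; your certificate $q=\prod_i(1-x_i)+x_1x_2x_3$ is simply the dual form of your two-atom argument, since its expectation is $p_\varnothing+p_{\{1,2,3\}}$.
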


\begin{proof}
Consider the $N=3$ specification
\[
\gamma_{\{1\}}=\gamma_{\{2\}}=\gamma_{\{3\}}=\frac12,
\qquad
\gamma_{\{i,j\}}=0 \quad \text{for all } i\neq j.
\]
Each requested pair is individually valid: it requires two Bernoulli variables with mean $1/2$ to be
perfectly negatively correlated. However, three variables cannot be pairwise perfect complements.
Algebraically, let $t:=\gamma_{\{1,2,3\}}=p_{\{1,2,3\}}$. M\"obius inversion gives
\[
p_{\{i,j\}}=\gamma_{\{i,j\}}-t=-t.
\]
Any valid law requires both $t=p_{\{1,2,3\}}\ge0$ and $p_{\{i,j\}}\ge0$, so necessarily $t=0$.
But then
\[
p_\varnothing
=1-\sum_i\gamma_{\{i\}}+\sum_{i<j}\gamma_{\{i,j\}}-t
=-\frac12<0.
\]
Thus no choice of third-order moment repairs the pairwise table. This directly demonstrates why pairwise
Bernoulli feasibility is not sufficient for joint feasibility.
\end{proof}

\paragraph{Required order of moments versus $N$.}
\begin{proposition}[High-order truncation needed]\label{prop:order_k_needed}
Let $\gamma_{\{i\}}$ be prescribed marginal means. If all moments above order $k$ are truncated to zero, then
any valid joint law has support contained in $\{x\in\{0,1\}^N: \sum_{i=1}^N x_i\le k\}$ by
\Cref{lem:moment_truncation_support}. Consequently,
\[
\mathbb{E}\!\left[\sum_{i=1}^N X_i\right]
=
\sum_{i=1}^N \gamma_{\{i\}}
\le k.
\]
Therefore, whenever $\sum_i \gamma_{\{i\}}$ is of order $N$, one must take $k=\Omega(N)$. In particular,
a truncation order $k=O(\log N)$ cannot guarantee feasibility in general. Moreover, if the means
$\gamma_{\{i\}}$ are sampled independently from a distribution on $[0,1]$ with positive mean $\mu>0$, then
Hoeffding's inequality implies
\[
\sum_{i=1}^N \gamma_{\{i\}} = \mu N + o(N)
\quad\text{with high probability,}
\]
so feasibility requires $k=\Omega(N)$ with high probability \cite{Hoeffding1963}.
\end{proposition}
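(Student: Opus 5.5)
The argument has three parts: a support bound, an expectation bound, and a concentration step.

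First, I invoke \Cref{lem:moment_truncation_support}. If $\gamma_A=0$ for every $|A|>k$ and the implied atoms from \eqref{eq:truncated_mobius} are nonnegative, then the resulting law puts zero mass on every atom $p_A$ with $|A|>k$. So any valid law in the truncated family is supported on $\{x:\sum_i x_i\le k\}$. On that set the random variable $S:=\sum_i X_i$ satisfies $S\le k$ pointwise almost surely. Taking expectations and using linearity together with $\mathbb{E}[X_i]=\gamma_{\{i\}}$ gives
\[
\sum_i\gamma_{\{i\}}=\mathbb{E}[S]\le k .
\]
This is the displayed inequality, and it is a necessary condition for feasibility within the order-$k$ family.

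Second, the asymptotic consequences follow directly. If $\sum_i\gamma_{\{i\}}\ge cN$ for some constant $c>0$, feasibility forces $k\ge cN$, so $k=\Omega(N)$. In particular any $k=O(\log N)$ eventually satisfies $k<cN$, and the truncated LP is then infeasible. I would make the phrase ``cannot guarantee feasibility in general'' concrete with an explicit family. The simplest is $\gamma_{\{i\}}=1/2$ with independent pairs, $\gamma_{\{i,j\}}=1/4$. This target is globally Bernoulli-feasible via the product law, yet $\sum_i\gamma_{\{i\}}=N/2>k$ for $k=O(\log N)$ and large $N$. So the obstruction comes from the truncation, not from the target itself.

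Third, for random means I take $\gamma_{\{i\}}$ i.i.d.\ on $[0,1]$ with mean $\mu>0$. Hoeffding's inequality gives
\[
\mathbb{P}\Bigl(\bigl|\textstyle\sum_i\gamma_{\{i\}}-\mu N\bigr|\ge t\Bigr)\le 2e^{-2t^2/N}.
\]
Choosing $t=\mu N/2$ shows that $\sum_i\gamma_{\{i\}}\ge \mu N/2$ with probability at least $1-2e^{-\mu^2N/2}\to1$. Choosing instead $t=\sqrt{N\log N}$ gives the stated $\mu N+o(N)$ form. Combining either with the first step yields $k\ge \mu N/2=\Omega(N)$ with high probability.

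The only delicate point is interpretive rather than technical. The ``with high probability'' statement concerns the randomness of the means, and the bound is a necessary condition only, not a sufficient one. I would state this explicitly, and also note that the argument does not use the pairwise moments at all.
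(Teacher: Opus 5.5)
Your proof is correct and follows the paper's argument: \Cref{lem:moment_truncation_support} gives $\sum_i X_i\le k$ almost surely, taking expectations yields $\sum_i\gamma_{\{i\}}\le k$, and Hoeffding's inequality gives linear growth of $\sum_i\gamma_{\{i\}}$ for i.i.d.\ means. Your explicit Hoeffding constants and your product-law example ($\gamma_{\{i\}}=1/2$, $\gamma_{\{i,j\}}=1/4$) sharpen steps the paper leaves informal; the example shows the obstruction comes from the truncation rather than from global infeasibility.
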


\begin{proof}
By \Cref{lem:moment_truncation_support}, truncating above order $k$ forces $p_A=0$ for all $|A|>k$.
Hence
\[
\sum_{i=1}^N X_i \le k
\qquad \text{almost surely.}
\]
Taking expectations gives
\[
\sum_{i=1}^N \gamma_{\{i\}}
=
\mathbb{E}\!\left[\sum_{i=1}^N X_i\right]
\le k.
\]
Thus, if the prescribed means satisfy $\sum_i \gamma_{\{i\}}>k$, no distribution within the order-$k$
truncation family can realize them.

Now suppose $\gamma_{\{1\}},\dots,\gamma_{\{N\}}$ are i.i.d.\ random variables in $[0,1]$ with
$\mathbb{E}[\gamma_{\{i\}}]=\mu>0$. Hoeffding's inequality gives exponential concentration of
\[
\sum_i \gamma_{\{i\}}
\]
around $\mu N$ \cite{Hoeffding1963}. Therefore
\[
\sum_{i=1}^N \gamma_{\{i\}} = \Theta(N)
\qquad \text{with high probability.}
\]
Combining this with the necessary condition $\sum_i\gamma_{\{i\}}\le k$ yields $k=\Omega(N)$ with high
probability. In particular, $k=O(\log N)$ is insufficient in general.
\end{proof}

\newpage
\section{Sparse-Support Refinement of the PMF-LP}\label{sec:pmf_lp_refinement}
The full PMF-LP assigns one variable to each state in $\{0,1\}^N$, which is its main computational bottleneck.
At the same time, the number of independent moment constraints is only
\begin{equation*}
m \;=\; 1 + N + \binom{N}{2}
\end{equation*}
This gap motivates a sparse-support strategy: rather than optimizing over all $2^N$ states at once, one works with a small candidate support and enlarges it only when necessary.

Formally, let $\mathcal{S}_t \subset \{0,1\}^N$ be the current working set.
We solve a restricted master LP with weights $w_x$ for $x\in\mathcal{S}_t$:
\begin{subequations}
\begin{align}
\min_{w,\xi^+,\xi^-}\quad & \mathbf{1}^\mathsf{T}(\xi^+ + \xi^-)\\
\text{s.t.}\quad
& \sum_{x\in\mathcal{S}_t} w_x = 1,\\
& \sum_{x\in\mathcal{S}_t} w_x x_i + \xi_i^+ - \xi_i^- = \mu_i,\qquad i=1,\ldots,N,\\
& \sum_{x\in\mathcal{S}_t} w_x x_i x_j + \xi_{ij}^+ - \xi_{ij}^- = \Sigma_{ij}+\mu_i\mu_j,\qquad 1\le i<j\le N,\\
& w,\xi^+,\xi^- \ge 0.
\end{align}
\end{subequations}
If the optimal slack is zero, then the current support already delivers an exact moment match.
More strongly, \Cref{prop:moment_polytope} proves that every feasible target admits an exact law on at most $m$ states, even though identifying such a support may remain computationally difficult.

The remaining question is how to add states.
Using dual multipliers from the restricted master, one can define a pricing score of the form
\[
q(x) \;=\; \lambda_0 + \sum_i \lambda_i x_i + \sum_{i<j}\lambda_{ij}x_ix_j,\qquad x\in\{0,1\}^N,
\]
and seek a state with large improvement.
This is a binary quadratic optimization subproblem, which can be handled approximately (e.g., local search, simulated annealing, or QUBO-style heuristics), then fed back into the master LP.

At a high level, the procedure is:
\begin{enumerate}[itemsep=-1ex]
\item Initialize $\mathcal{S}_0$ with random and/or structured states.
\item Solve the restricted master LP.
\item Generate one or more new states via the pricing subproblem.
\item Add them to $\mathcal{S}_t$ and repeat until slack is zero or progress stalls.
\end{enumerate}
This refinement does not remove worst-case exponential behavior, but it can reduce memory use and runtime substantially on structured instances.

A related idea is to sample a subset of variables first (for example $X_1,\ldots,X_k$) and then solve an LP for the remainder.
In general, however, that decomposition is not exact unless one enforces conditional moment constraints for each sampled prefix, which reintroduces exponential complexity.
The sparse-support working-set approach avoids that issue by keeping all constraints global.

\end{document}